\documentclass[aps,prl,twocolumn,superscriptaddress,notitlepage,noshowkeys]{revtex4-2}
\usepackage[T1]{fontenc}
\usepackage[english]{babel}
\usepackage{amsmath}
\usepackage{mathtools}
\usepackage{amssymb}
\usepackage{amsthm}
\usepackage{revsymb}
\usepackage{todonotes}
\usepackage[shortlabels]{enumitem}
\usepackage{graphicx}
\usepackage{mathrsfs}
\usepackage{float}
\usepackage{comment}
\usepackage[newcommands]{ragged2e}
\makeatletter
\let\MYcaption\@makecaption
\makeatother

\usepackage[font=footnotesize]{subcaption}

\makeatletter
\let\@makecaption\MYcaption
\makeatother
\usepackage{xcolor}
\usepackage{mdframed}
\mdfsetup{
	linecolor=white,
	backgroundcolor=gray!20,
}
\usepackage{tikz}
\usetikzlibrary{positioning}
\usetikzlibrary {arrows.meta}
\usepackage{pgfplots}
\usepackage{braket}
\usepackage{bm}
\usepackage[colorlinks=true, allcolors=darkblue,breaklinks=true]{hyperref}

\usepackage{scalerel}
\usetikzlibrary{svg.path}
\definecolor{orcidlogocol}{HTML}{A6CE39}
\tikzset{
	orcidlogo/.pic={
		\fill[orcidlogocol] svg{M256,128c0,70.7-57.3,128-128,128C57.3,256,0,198.7,0,128C0,57.3,57.3,0,128,0C198.7,0,256,57.3,256,128z};
		\fill[white] svg{M86.3,186.2H70.9V79.1h15.4v48.4V186.2z}
		svg{M108.9,79.1h41.6c39.6,0,57,28.3,57,53.6c0,27.5-21.5,53.6-56.8,53.6h-41.8V79.1z M124.3,172.4h24.5c34.9,0,42.9-26.5,42.9-39.7c0-21.5-13.7-39.7-43.7-39.7h-23.7V172.4z}
		svg{M88.7,56.8c0,5.5-4.5,10.1-10.1,10.1c-5.6,0-10.1-4.6-10.1-10.1c0-5.6,4.5-10.1,10.1-10.1C84.2,46.7,88.7,51.3,88.7,56.8z};
	}
}
\newcommand{\orcidlink}[1]{\href{https://orcid.org/#1}{\mbox{\scalerel*{
				\begin{tikzpicture}[yscale=-1,transform shape]
					\pic{orcidlogo};
				\end{tikzpicture}
			}{X}}}}

\definecolor{darkblue}{rgb}{0.1, 0.1, 0.6}
\definecolor{plotblue}{RGB}{0, 114, 178}
\definecolor{plotorange}{RGB}{255, 127, 14}
\definecolor{plotgreen}{RGB}{44, 160, 44}
\definecolor{plotred}{RGB}{214, 39, 40}
\definecolor{plotpurple}{RGB}{148, 103, 189}
\definecolor{plotbrown}{RGB}{140, 86, 75}
\definecolor{plotpink}{RGB}{227, 119, 194}

\theoremstyle{definition}
\newtheorem{thm}{Theorem}
\newtheorem{lem}[thm]{Lemma}

\newtheorem*{rem*}{Remark}
\newtheorem*{ex*}{Example}
\newtheorem*{defin*}{Definition}
\newtheorem*{thm*}{Theorem}
\newtheorem*{lem*}{Lemma}
\newtheorem*{cor*}{Corollary}
\newtheorem*{def*}{Definition}
\newtheorem*{main_one}{State-Dependent Trotter Error}
\newtheorem*{main_two}{Norm Trotter Error}

\newcommand{\rme}{\mathrm{e}}
\newcommand{\rmi}{\mathrm{i}}
\newcommand{\rmd}{\mathrm{d}}
\newcommand{\tr}{\operatorname{tr}}

\newcommand{\ad}{\operatorname{ad}}
\newcommand{\Ad}{\operatorname{Ad}}
\newcommand{\id}{\mathrm{id}}
\newcommand{\op}{\mathrm{op}}
\newcommand{\F}{\mathrm{F}}

\begin{document}
\title{Lower Bounds for the Trotter Error}
\author{Alexander Hahn\orcidlink{0000-0002-4152-9854}}
\affiliation{Center for Engineered Quantum Systems, Macquarie University, 2109 NSW, Australia}
\author{Paul Hartung}
\affiliation{Department Physik, Friedrich-Alexander-Universit\"at Erlangen-N\"urnberg, Staudtstra\ss e 7, 91058 Erlangen, Germany}
\author{Daniel Burgarth\orcidlink{0000-0003-4063-1264}}
\affiliation{Department Physik, Friedrich-Alexander-Universit\"at Erlangen-N\"urnberg, Staudtstra\ss e 7, 91058 Erlangen, Germany}
\affiliation{Center for Engineered Quantum Systems, Macquarie University, 2109 NSW, Australia}
\author{Paolo Facchi\orcidlink{0000-0001-9152-6515}}
\affiliation{Dipartimento di Fisica, Universit\`a di Bari, I-70126 Bari, Italy}
\affiliation{INFN, Sezione di Bari, I-70126 Bari, Italy}
\author{Kazuya Yuasa\orcidlink{0000-0001-5314-2780}}
\affiliation{Department of Physics, Waseda University, Tokyo 169-8555, Japan}
\date{\today}

\begin{abstract}
		In analog and digital simulations of practically relevant quantum systems, the target dynamics can only be implemented approximately.
		The Trotter product formula is the most common approximation scheme as it is a generic method which allows tuning accuracy.
		The Trotter simulation precision will always be inexact for non-commuting operators, but it is currently unknown what the \emph{minimum} possible error is.
		This is an important quantity because upper bounds for the Trotter error are known to often be vast overestimates.
		Here, we present explicit lower bounds on the error, in norm and on states, allowing to derive minimum resource requirements.
		Numerical comparison with the true error shows that our bounds offer accurate and tight estimates.
\end{abstract}
\maketitle

\emph{Introduction.}---Finding approximate solutions to the Schr{\"o}dinger equation by simulating the time evolution under a given Hamiltonian is one of the most important tasks in quantum mechanics.
To tackle this problem, many simulation algorithms have been developed, which can be broadly categorized into three classes: (i) analog quantum simulation~\cite{Langford2017,MacDonell2021,Hangleiter2022,Daley2023}, (ii) digital quantum simulation~\cite{Lloyd1996,Childs2021,Fauseweh2024}, and (iii) classical numerical simulation~\cite{Hairer2006}.
One of the most commonly used approximation methods across all disciplines is the Trotter product formula.
This is mainly for two reasons.
First, the Trotter product formula is a generic way to implement intricate quantum dynamics only by switching quickly between simpler dynamics.
Second, the accuracy of the Trotter approximation can be tuned to a desired precision by optimizing the small evolution times of each step.

To assess the simulation precision in practice, it is necessary to study the Trotter error, i.e.\ the difference between the Trotterized and target evolutions.
Due to its fundamental importance in simulations of quantum systems, countless work has been done on studying the Trotter error.
The current state-of-the-art is presented in Ref.~\cite{Childs2021} for the norm Trotter error and in Ref.~\cite{Burgarth2023} for the state-dependent Trotter error.
These results give upper bounds and explicitly quantify the worst-case scenario.
However, they often do not capture the \emph{actual} Trotter error well and tend to overestimate it~\cite{Heyl2019,Childs2019,Tran2020,Richter2021,Layden2022}\@.
This creates an urge for lower bounds specifying the best-case scenario.
Such lower bounds would provide a possible corridor for the true Trotter error, enabling a solid simulation precision analysis.
Furthermore, as lower bounds quantify the minimum possible error, they could rule out the practical usefulness of certain quantum simulation applications, or enable a complexity comparison among certain classical and quantum algorithms.
Nevertheless, little progress has been made in recent years in bounding  algorithmic complexities from below.
This is because the derivation of non-trivial lower bounds has turned out to be extremely challenging.
Here, we present explicit lower bounds on the Trotter errors, both in norm and on eigenstates.

Such lower bounds are relevant from different perspectives:
(i) Hamiltonian complexity: Combining the known upper bounds with lower bounds shows the complexity of implementing Trotterization in terms of the simulation parameters.
So far, only circuit~\cite{Knill1995,Bullock2004,Welch2014,Jia2023,Low2023} and query lower bounds~\cite{Berry2006,Childs2010,Berry2014} are known for general quantum circuits approximating a unitary $\rme^{-\rmi t H}$.
These do not answer the question of the actual simulation complexity in terms of the relevant Trotter simulation parameters.
In particular, the circuit lower bounds do not specify how large the Trotter error becomes, but only  bound below the gate complexity in terms of a desired simulation accuracy.
The query lower bounds (no fast-forwarding theorem) only imply a very loose complexity of $\Omega(t)$, which is not quantitative and does not even depend on the number of Trotter steps $n$.
Here, we show that the norm Trotter error for a finite-dimensional system scales as $\Omega(\max\{(t^2-t^3)/n,t/n-t^2/n^2\})$.
(ii) Resource estimates for simulation algorithms: Lower bounds on the Trotter errors give a strict limit on how many Trotter steps $n$ are necessary \emph{at least} to achieve a desired simulation accuracy.
This is particularly useful for resource estimates for digital \emph{and} analog quantum simulation on near-term quantum devices with limited capabilities~\cite{Childs2018,Childs2021}\@.
For these, lower bounds on the Trotter errors can determine whether the execution of a digital or analog quantum simulation algorithm is feasible.
By providing explicit bounds, our results enable the performance of such resource estimates.
(iii) Mathematical physics: The study of asymptotic Trotter convergence has a long history in mathematical physics~\cite{Kato1978,Zagrebnov2024}\@.
Here, important questions are: under which conditions does the Trotter product formula converge, in which topology, and which convergence speed does it admit?
We prove that the Trotter limit converges uniformly as $\Theta(1/n)$ for finite-dimensional systems.
(iv) Quantum speed limits: Speed limits are fundamental limitations for steering a quantum system from one state to another.
They have been used to determine the minimum time to control a quantum system~\cite{Caneva2009,Zhang2023} or to run a quantum computation~\cite{Lloyd2000,Lloyd2002}\@.
Our bounds give a fundamental precision limit for Trotterization and thus a speed limit for its convergence: we can do as much as $1/n$.
In turn, we obtain a speed limit for the convergence of Trotter-based quantum control and quantum computation schemes.

\emph{Setting.}---We focus on the case where a Hamiltonian is decomposed into two realizable components as the behavior of the Trotter error remains insufficiently characterized already in this context~\cite{Layden2022}\@.
Then, Trotterization aims to implement the time evolution $\rme^{-\rmi t (A+B)}$ under a Hamiltonian $H=A+B$ for some time $t\in\mathbb{R}$.
This is done by alternating between the dynamics under $A$ and $B$ for short time steps $t/n$, i.e.\ $\rme^{-\rmi (A+B) t}\approx(\rme^{-\rmi At/n}\rme^{-\rmi Bt/n})^n$ with $n\in\mathbb{N}$.
In the language of analog quantum simulation, one Trotter cycle $\rme^{-\rmi At/n}\rme^{-\rmi Bt/n}$ represents the evolution for one driving period.
Here, the individual evolutions $\rme^{-\rmi At/n}$ and $\rme^{-\rmi Bt/n}$ are the dynamics induced by respective control pulses.
In the case of digital quantum simulation, the individual dynamics $\rme^{-\rmi At/n}$ and $\rme^{-\rmi Bt/n}$ are realized on qubits, decomposed into elementary quantum gates.
For classical simulation algorithms, exponentiation of digital representations of $A$ and $B$ can usually be performed efficiently. See, for instance, the split-step algorithm~\cite{Hairer2006}\@.

In the following, we assume that 
$A$ and $B$ are bounded Hermitian operators acting on the Hilbert space $\mathcal{H}$.
For an eigenstate $\varphi$ of the Hermitian operator $A+B$ satisfying $(A+B)\varphi=h\varphi$, we define the state-dependent Trotter error by
\begin{equation*}
	\xi_n(t;\varphi)\equiv \left\Vert \left(\rme^{-\rmi\frac{t}{n}A}\rme^{-\rmi\frac{t}{n}B}\right)^n\varphi-\rme^{-\rmi th}\varphi \right\Vert,
	\label{eq:def_state_bound}
\end{equation*}
where $\Vert\psi\Vert=\sqrt{\langle\psi|\psi\rangle}$ is the norm of $\psi\in\mathcal{H}$.
Analogously, the norm Trotter error is defined by
\begin{equation*}
	b_n(t)\equiv \left\Vert \left(\rme^{-\rmi\frac{t}{n}A}\rme^{-\rmi\frac{t}{n}B}\right)^n-\rme^{-\rmi t(A+B)} \right\Vert_\op,
\end{equation*}
where $\Vert X\Vert_\op=\sup_{\Vert\psi\Vert=1}\Vert X\psi\Vert$ denotes the operator norm.
The operator norm $\Vert X\Vert_\op$ gives the largest singular value of $X$, and we have
	$\Vert X\Vert_\op \geq \Vert X\psi\Vert$, for all normalized 
	$\psi\in\mathcal{H}$.
To describe the asymptotic behaviors of the Trotter errors in terms of a system or simulation parameter $x$, one uses the notation $\mathcal{O}(f(x))$ for the scaling of an upper bound and $\Omega(g(x))$ for the scaling of a lower bound.
$\mathcal{O}(f(x))$ indicates that for sufficiently large $x$, the Trotter error grows no faster than a constant multiple of $f(x)$.
Analogously, $\Omega(g(x))$ means that the Trotter error grows at least as fast as a constant multiple of $g(x)$ for a sufficiently large $x$.
If the scalings of the upper and lower bounds agree, i.e.~if $f=g$ up to a constant factor, the result is tight and one writes $\Theta(f(x))$.
It is known that $\xi_n(t;\varphi)=\mathcal{O}(t^2/n)$~\cite{Burgarth2023a,Burgarth2023} and $b_n(t)=\mathcal{O}(t^2/n)$~\cite{Suzuki1985,Childs2021}\@.
In particular, we have for all $g\in\mathbb{R}$~\cite{Burgarth2023a,Burgarth2023}
\begin{equation}
\xi_n(t;\varphi)\leq \frac{t^2}{2n}\,\Bigl(
\Vert (A-h+g)^2\varphi\Vert)+\Vert (B-g)^2\varphi\Vert
\Bigr)
\label{eq:upper_state_bound}
\end{equation}
and~\cite{Suzuki1985,Childs2021}
\begin{equation}
b_n(t)\leq\frac{t^2}{2n}\Vert[A,B]\Vert_\op.
\label{eq:upper_norm_bound}
\end{equation}
Since the bound~\eqref{eq:upper_state_bound} holds for all $g\in\mathbb{R}$, we can take the infimum over $g$ to obtain the tightest bound.

\emph{Main results.}---We now complete the picture of the Trotter error analysis by presenting lower bounds on $\xi_n(t;\varphi)$ and $b_n(t)$.
\begin{main_one}
		Consider two normalized eigenstates $\varphi$ and $\psi$ of the Hamiltonian $A+B$, i.e.\ $(A+B)\varphi=h\varphi$ and $(A+B)\psi=\kappa\psi$.
		Assume that $h\neq\kappa$~\cite{Footnote_Identity} and define the spectral gap $\lambda\equiv\vert h-\kappa\vert$.
		Then, for all $g\in\mathbb{R}$ and 
		$t\geq 0$, the state-dependent Trotter error $\xi_n(t;\varphi)$ can be  bounded from below by
		\begin{align}
			\xi_n(t;\varphi)\geq{}& \left|\sin\left(\frac{\lambda t}{2}\right)\right|\left(\frac{t}{n}\bigl|\langle \psi|A\varphi\rangle\bigr|-\frac{t^{2}}{4n^{2}}\Vert (A-g)^2\varphi\Vert\right)\nonumber\\
			&{}-\frac{t^{3}}{24n^{2}}\Vert[A,[A,B]]\Vert_\op - \frac{t^{3}}{12n^{2}}\Vert[B,[B,A]]\Vert_\op.
			\label{eqn:StateBound}
		\end{align}
\end{main_one}
This bound is proved in Sec.~\hyperlink{app:state-bounds}{B} of the Supplemental Material (SM)~\cite{SM}.
Since this bound holds for all $g\in\mathbb{R}$, we can take the supremum over $g$ to obtain the tightest bound.
This involves taking the $\inf_{g}\Vert (A-g)^2\varphi\Vert$.
Furthermore, we can take the supremum over all suitable eigenstates $\psi$ to obtain the tightest result.
This bound shows that the state-dependent Trotter error $\xi_n(t;\varphi)$ scales as $\Omega(t/n-t^2/n^2)$.
Combining this result with the upper bound~(\ref{eq:upper_state_bound}) proves that the state-dependent Trotter error $\xi_n(t;\varphi)$ diminishes as $\Theta(1/n)$ on all eigenstates $\varphi$ of $A+B$.
We can always choose the Trotter number $n$ big enough to make the lower bound~\eqref{eqn:StateBound} non-trivial (larger than zero).
Such $n$ can be explicitly calculated as shown in the SM~\cite{SM}.
See Eq.~(\ref{eq:state_bound_non-trivial}) of Sec.~\hyperlink{app:state-bounds}{B}\@.

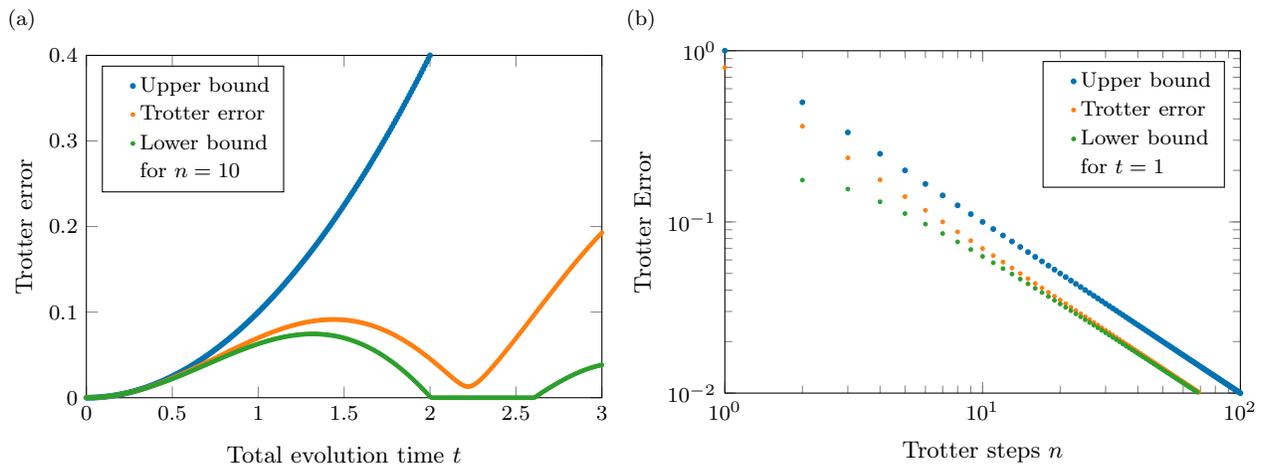
\begin{figure*}
\centering
\begin{tabular}{rr}
\multicolumn{1}{l}{\footnotesize(a)}
&
\multicolumn{1}{l}{\footnotesize(b)}
\\
\begin{tikzpicture}[mark size={0.7}, scale=1]
	\begin{axis}[
	xlabel={Total evolution time $t$},
	ylabel={Trotter error},
	tick label style={font=\footnotesize},
	x post scale=1.0,
	y post scale=0.8,
	legend cell align={left},
	legend pos=north west,
	ylabel near ticks,
	xmin=0,
	xmax=3,
	ymin=0,
	ymax=0.4,
	]
	\addplot[color=plotblue, thick, mark=*, only marks] table[x=t, y=error, col sep=comma] {One_Qubit_Time_Upper_Bound.csv};
	\addplot[color=plotorange, mark=*, only marks] table[x=t, y=error, col sep=comma] {One_Qubit_Time_Error.csv};
	\addplot[color=plotgreen, mark=*, only marks] table[x=t, y=error, col sep=comma] {One_Qubit_Time_Lower_Bound.csv};
	\addlegendimage{empty legend};
	\legend{
	\footnotesize Upper bound,
	\footnotesize Trotter error,
	\footnotesize Lower bound,
	\footnotesize for $n=10$
	};
	\end{axis}
\end{tikzpicture}
&
\begin{tikzpicture}[mark size={0.7}, scale=1]
	\begin{axis}[
	xlabel={Trotter steps $n$},
	ylabel={Trotter Error},
	tick label style={font=\footnotesize},
	x post scale=1.0,
	y post scale=0.8,
	xmode = log,
	ymode = log,
	legend cell align={left},
	legend pos=north east,
	ymin = 0,
	ylabel near ticks,
	xmin=1,
	xmax=100,
	ymin=0.01,
	ymax=1,
	]
	\addplot[color=plotblue, thick, mark=*, only marks] table[x=n, y=error, col sep=comma] {One_Qubit_Steps_Upper_Bound.csv};
	\addplot[color=plotorange, mark=*, only marks] table[x=n, y=error, col sep=comma] {One_Qubit_Steps_Error.csv};
	\addplot[color=plotgreen, mark=*, only marks] table[x=n, y=error, col sep=comma] {One_Qubit_Steps_Lower_Bound.csv};
	\addlegendimage{empty legend};
	\legend{
	\footnotesize Upper bound,
	\footnotesize Trotter error,
	\footnotesize Lower bound,
	\footnotesize for $t=1$
	};
	\end{axis}
\end{tikzpicture}
\end{tabular}
\begin{justify}
\caption{
Trotter errors for the single-qubit Pauli matrices $A=X$ and $B=Z$.
We show upper bounds (blue), the actual Trotter errors (orange), and our lower bounds (green).
For the state-dependent error $\xi_n(t;\varphi)$, we choose $\varphi$ to be the ground state of $H=A+B$, i.e.\ $H\varphi=-\sqrt{2}\varphi$.
The upper and lower bounds for both state-dependent Trotter error $\xi_n(t;\varphi)$ and norm Trotter error $b_n(t)$ are given in Eq.~\eqref{eq:1qubit_state_norm_bound} in the SM~\cite{SM}\@.
Notice that the respective bounds for the norm error and the state-dependent error coincide.
Furthermore, we find that the numerical values for the norm and state-dependent Trotter error are the same.
Therefore, the plots capture both the state-dependent and the norm error.
In all cases, our lower bound is well close to the actual Trotter error.
(a) The Trotter error $\xi_n(t;\varphi)$ or $b_n(t)$ as a function of the total evolution time $t$ for a fixed number of Trotter steps $n=10$.
(b) The Trotter error $\xi_n(t;\varphi)$ or $b_n(t)$ as a function of the number of Trotter steps $n$ for a fixed total evolution time $t=1$.
\label{fig:1qubit}
}
\end{justify}
\end{figure*}

\begin{main_two}
	The norm Trotter error $b_n(t)$ for the Hamiltonian $A+B$ can be  bounded from below by
\begin{align}
b_n(t)\geq{}&
\frac{t^2}{2n}
\left(
1
-
\frac{t}{2n}
\|A\|_\op
\right)\nonumber\\
&\hphantom{\frac{t^2}{2n}}
{}\times
\left(
\|[A,B]\|_\op
-
\frac{t}{2}
\|
[A+B,[A,B]]
\|_\op
\right)
\nonumber\\
&{}
-\frac{t^3}{24n^2}\|[A,[A,B]]\|_\op
-\frac{t^3}{12n^2}\|[B,[B,A]]\|_\op,
\label{eqn:NormBound}
\end{align}
for $0\le t\le2n/\|A\|_\op$.
\end{main_two}
A slightly tighter version of this bound is provided in the SM~\cite{SM}.
See Eq.~\eqref{eqn:complete_norm_bound} of Sec.~\hyperlink{app:norm-bounds}{C}\@.
Our bound shows that the norm Trotter error $b_n(t)$ scales as $\Omega((t^2-t^3)/n)$.
Together with the upper bound~(\ref{eq:upper_norm_bound}), this shows that  the norm Trotter error $b_n(t)$ diminishes uniformly as $\Theta(1/n)$.
Furthermore, the bound is sharp in the sense that it is saturated by two commuting Hamiltonians $[A,B]=0$.
For $[A,B]\neq0$, we can always make the lower bound~\eqref{eqn:NormBound} non-trivial by choosing $t$ small enough.
Such $t$ is computed in the SM~\cite{SM}.
See Eq.~(\ref{eq:norm_non-trivial}) of Sec.~\hyperlink{app:norm-bounds}{C}\@.
The lower bound on the norm Trotter error $b_n(t)$ can be further improved by taking the maximum with the lower bound on the state-dependent Trotter error $\xi_n(t;\varphi)$.
Recall that the lower bound on the state-dependent Trotter error $\xi_n(t;\varphi)$ is non-trivial for large $n$, whereas  the lower bound on the norm Trotter error $b_n(t)$ is non-trivial for small $t$.
Taking the maximum between these lower bounds allows for a wide corridor of simulation parameters where we can give a non-trivial lower bound on the Trotter error.
We now show two example systems, for which this is indeed the case, and compare our bounds with actual errors.

\emph{Examples.}---The first example we examine is a simple single-qubit system, where $A=X$ and $B=Z$ are Pauli matrices.
The lower bounds for this system are computed in the SM~\cite{SM}.
See Eq.~\eqref{eq:1qubit_state_norm_bound} in Sec.~\hyperlink{app:1-qubit_example}{D.1}\@.
For the state-dependent bound, we take the ground state of the target Hamiltonian $H=A+B$ as an input state $\varphi$, i.e.\ $H\varphi=-\sqrt{2}\varphi$.
The other eigenstate $\psi$ of $H=A+B$ belongs to the eigenvalue $\sqrt{2}$, i.e.\ $H\psi=\sqrt{2}\psi$, so that $\lambda=2\sqrt{2}$.
We compare our lower bounds as well as the known upper bounds~\cite{Burgarth2023,Burgarth2023a,Suzuki1985,Childs2021} with numerically estimated Trotter errors.
To this end, we show the Trotter error both as a function of the total evolution time $t$ and as a function of the number of Trotter steps $n$.
See Fig.~\ref{fig:1qubit}\@.
We find that our bounds are close to the actual error in all situations and over a wide range of parameters.
The lower bound is non-trivial for $n>\frac{at}{2\sqrt{2}}\left(1+\frac{2t}{a|{\sin\sqrt{2}\,t}|}\right)$ with $a=0.788903$.
For example, a simulation time of $t=1$ yields the condition $n\ge1$, and the lower bound is non-trivial for any $n$.

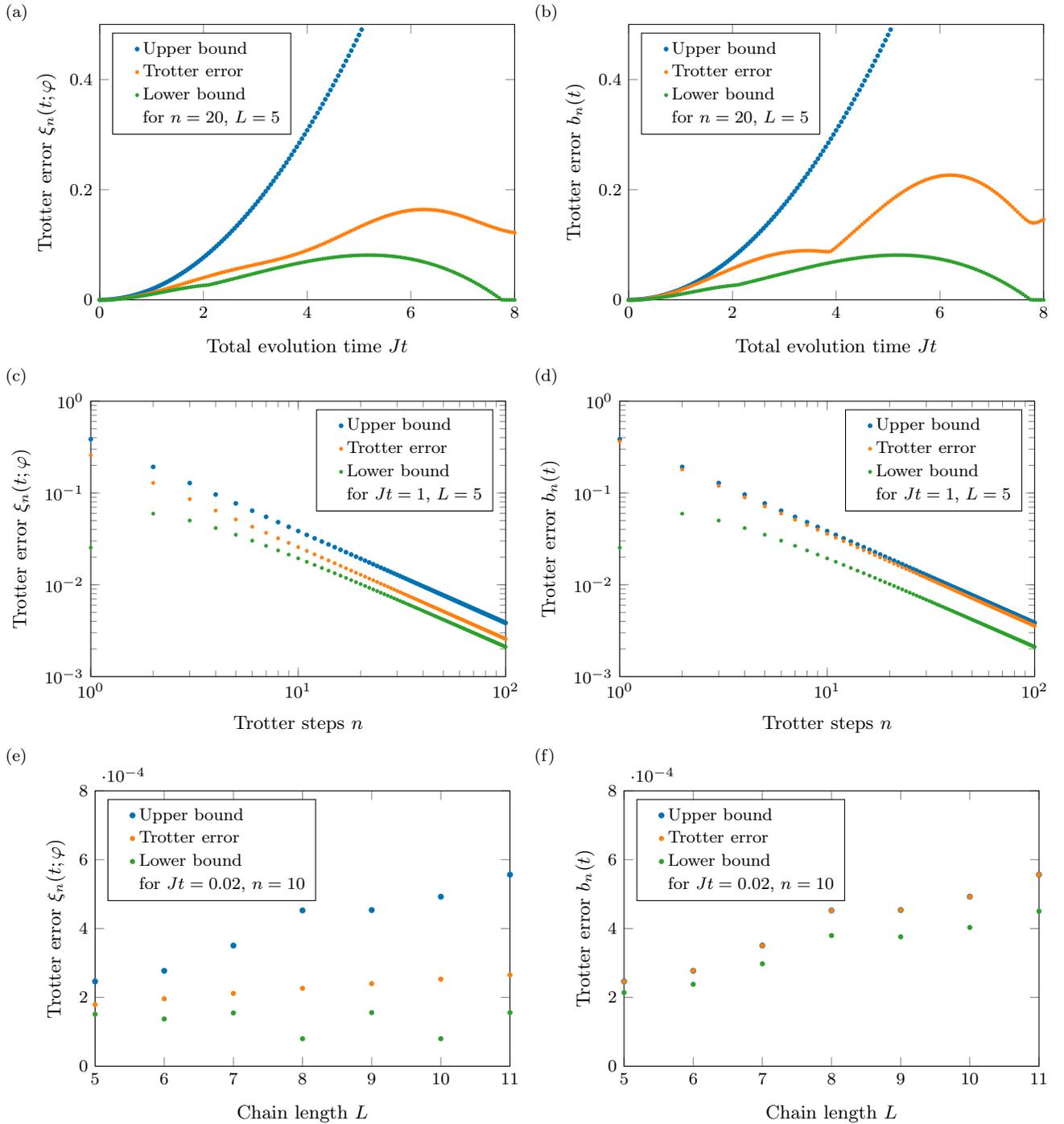
\begin{figure*}
\centering
\begin{tabular}{rr}
\multicolumn{1}{l}{\footnotesize(a)}
&
\multicolumn{1}{l}{\footnotesize(b)}
\\
	\begin{tikzpicture}[mark size={0.7}, scale=1]
			\begin{axis}[
			xlabel={Total evolution time $Jt$},
			ylabel={Trotter error $\xi_n(t;\varphi)$},
			tick label style={font=\footnotesize},
			x post scale=1.0,
			y post scale=0.8,
			legend cell align={left},
			legend pos=north west,
			ylabel near ticks,
			xmin=0,
			xmax=8,
			ymin=0,
			ymax=0.5,
			]
			\addplot[color=plotblue, thick, mark=*, only marks] table[x=t, y=error, col sep=comma] {XY_State_Time_Upper_Bound.csv};
			\addplot[color=plotorange, mark=*, only marks] table[x=t, y=error, col sep=comma] {XY_State_Time_Error.csv};
			\addplot[color=plotgreen, mark=*, only marks] table[x=t, y=error, col sep=comma] {XY_State_Time_Lower_Bound.csv};
			\addlegendimage{empty legend};
			\legend{
			\footnotesize Upper bound,
			\footnotesize Trotter error,
			\footnotesize Lower bound,
			\footnotesize {for $n=20$, $L=5$}
			};
			\end{axis}
		\end{tikzpicture}
&
	\begin{tikzpicture}[mark size={0.7}, scale=1]
			\begin{axis}[
			xlabel={Total evolution time $Jt$},
			ylabel={Trotter error $b_n(t)$},
			tick label style={font=\footnotesize},
			x post scale=1.0,
			y post scale=0.8,
			legend cell align={left},
			legend pos=north west,
			ylabel near ticks,
			yticklabel pos=left,
			xmin=0,
			xmax=8,
			ymin=0,
			ymax=0.5,
			]
			\addplot[color=plotblue, thick, mark=*, only marks] table[x=t, y=error, col sep=comma] {XY_Norm_Time_Upper_Bound.csv};
			\addplot[color=plotorange, mark=*, only marks] table[x=t, y=error, col sep=comma] {XY_Norm_Time_Error.csv};
			\addplot[color=plotgreen, mark=*, only marks] table[x=t, y=error, col sep=comma] {XY_Norm_Time_Lower_Bound.csv};
			\addlegendimage{empty legend};
			\legend{
			\footnotesize Upper bound,
			\footnotesize Trotter error,
			\footnotesize Lower bound,
			\footnotesize {for $n=20$, $L=5$}
			};
			\end{axis}
		\end{tikzpicture}
\\
\multicolumn{1}{l}{\footnotesize(c)}
&
\multicolumn{1}{l}{\footnotesize(d)}
\\
	\begin{tikzpicture}[mark size={0.7}, scale=1]
			\begin{axis}[
			xlabel={Trotter steps $n$},
			ylabel={Trotter error $\xi_n(t;\varphi)$},
			tick label style={font=\footnotesize},
			x post scale=1.0,
			y post scale=0.8,
			xmode = log,
			ymode = log,
			legend cell align={left},
			legend pos=north east,
			ymin = 0,
			ylabel near ticks,
			xmin=1,
			xmax=100,
			ymax=1,
			ymin=0.001,
			]
			\addplot[color=plotblue, thick, mark=*, only marks] table[x=n, y=error, col sep=comma] {XY_State_Steps_Upper_Bound.csv};
			\addplot[color=plotorange, mark=*, only marks] table[x=n, y=error, col sep=comma] {XY_State_Steps_Error.csv};
			\addplot[color=plotgreen, mark=*, only marks] table[x=n, y=error, col sep=comma] {XY_State_Steps_Lower_Bound.csv};
			\addlegendimage{empty legend};
			\legend{
			\footnotesize Upper bound,
			\footnotesize Trotter error,
			\footnotesize Lower bound,
			\footnotesize {for $Jt=1$, $L=5$}
			};
			\end{axis}
		\end{tikzpicture}
&
	\begin{tikzpicture}[mark size={0.7}, scale=1]
			\begin{axis}[
			xlabel={Trotter steps $n$},
			ylabel={Trotter error $b_n(t)$},
			tick label style={font=\footnotesize},
			x post scale=1.0,
			y post scale=0.8,
			xmode = log,
			ymode = log,
			legend cell align={left},
			legend pos=north east,
			ylabel near ticks,
			yticklabel pos=left,
			xmin=1,
			xmax=100,
			ymax=1,
			ymin=0.001,
			]
			\addplot[color=plotblue, thick, mark=*, only marks] table[x=n, y=error, col sep=comma] {XY_Norm_Steps_Upper_Bound.csv};
			\addplot[color=plotorange, mark=*, only marks] table[x=n, y=error, col sep=comma] {XY_Norm_Steps_Error.csv};
			\addplot[color=plotgreen, mark=*, only marks] table[x=n, y=error, col sep=comma] {XY_Norm_Steps_Lower_Bound.csv};
			\addlegendimage{empty legend};
			\legend{
			\footnotesize Upper bound,
			\footnotesize Trotter error,
			\footnotesize Lower bound,
			\footnotesize {for $Jt=1$, $L=5$}
			};
			\end{axis}
		\end{tikzpicture}
\\
\multicolumn{1}{l}{\footnotesize(e)}
&
\multicolumn{1}{l}{\footnotesize(f)}
\\
	\begin{tikzpicture}[mark size={1}, scale=1]
			\begin{axis}[
			xlabel={Chain length $L$},
			ylabel={Trotter error $\xi_n(t;\varphi)$},
			tick label style={font=\footnotesize},
			x post scale=1.0,
			y post scale=0.8,
			legend cell align={left},
			legend pos=north west,
			ylabel near ticks,
			xmin=5,
			xmax=11,
			ymin=0,
			ymax=0.0008,
			]
			\addplot[color=plotblue, thick, mark=*, only marks] table[x=L, y=error, col sep=comma] {XY_State_Length_Upper_Bound.csv};
			\addplot[color=plotorange, mark=*, only marks] table[x=L, y=error, col sep=comma] {XY_State_Length_Error.csv};
			\addplot[color=plotgreen, mark=*, only marks] table[x=L, y=error, col sep=comma] {XY_State_Length_Lower_Bound.csv};
			\addlegendimage{empty legend};
			\legend{
			\footnotesize Upper bound,
			\footnotesize Trotter error,
			\footnotesize Lower bound,
			\footnotesize {for $Jt=0.02$, $n=10$}
			};
			\end{axis}
		\end{tikzpicture}
&
	\begin{tikzpicture}[mark size={1}, scale=1]
			\begin{axis}[
			xlabel={Chain length $L$},
			ylabel={Trotter error $b_n(t)$},
			tick label style={font=\footnotesize},
			x post scale=1.0,
			y post scale=0.8,
			legend cell align={left},
			legend pos=north west,
			ylabel near ticks,
			yticklabel pos=left,
			xmin=5,
			xmax=11,
			ymin=0,
			ymax=0.0008,
			]
			\addplot[color=plotblue, thick, mark=*, only marks] table[x=L, y=error, col sep=comma] {XY_Norm_Length_Upper_Bound.csv};
			\addplot[color=plotorange, mark=*, only marks] table[x=L, y=error, col sep=comma] {XY_Norm_Length_Error.csv};
			\addplot[color=plotgreen, mark=*, only marks] table[x=L, y=error, col sep=comma] {XY_Norm_Length_Lower_Bound.csv};
			\addlegendimage{empty legend};
			\legend{
			\footnotesize Upper bound,
			\footnotesize Trotter error,
			\footnotesize Lower bound,
			\footnotesize {for $Jt=0.02$, $n=10$}
			};
			\end{axis}
		\end{tikzpicture}
\end{tabular}
\begin{justify}
\caption{Trotter errors for the $XX$ model $H=A+B$ in Eq.~\eqref{eq:XX-model} with $A = \frac{J}{4} \sum_{j=0}^{-1}  X_j X_{j+1}$ and $B = \frac{J}{4} \sum_{j=0}^{L-1}  Y_j Y_{j+1}$.
We show upper bounds (blue), the actual Trotter errors (orange), and our lower bounds (green) for the two different scenarios.
Slightly looser, but explicit versions of the upper and lower bounds for both state-dependent Trotter error $\xi_n(t;\varphi)$ and norm Trotter error $b_n(t)$ are given in Eqs.~(\ref{eq:chain_state_bound}) and~(\ref{eq:chain_norm_bound}) in the SM~\cite{SM}, respectively. 
In the state-dependent case, we consider the eigenstate $\varphi=\ket{\downarrow\cdots\downarrow}$.
In all cases, our lower bounds capture the true error very accurately.
(a) The state-dependent Trotter error $\xi_n(t;\varphi)$ as a function of the total evolution time $Jt$ for a fixed number of Trotter steps $n=20$ and a chain length $L=5$.
(b) The norm Trotter error $b_n(t)$ as a function of the total evolution time $Jt$ for a fixed number of Trotter steps $n=20$ and a chain length $L=5$.
(c) The state-dependent Trotter error $\xi_n(t;\varphi)$ as a function of the number of Trotter steps $n$ for a fixed total evolution time $Jt=1$ and a chain length $L=5$.
(d) The norm Trotter error $b_n(t)$ as a function of the number of Trotter steps $n$ for a fixed total evolution time $Jt=1$ and a chain length $L=5$.
(e) The state-dependent Trotter error $\xi_n(t;\varphi)$ as a function of the chain length $L$ for a fixed total evolution time $Jt=0.02$ and a number of Trotter steps $n=10$.
(f) The norm Trotter error $b_n(t)$ as a function of the chain length $L$ for a fixed total evolution time $Jt=0.02$ and a number of Trotter steps $n=10$.
\label{fig:XX}
}
\end{justify}
\end{figure*}

As a second example, we consider the isotropic one-dimensional $XX$ model of chain length $L$,
\begin{equation}
    H = \frac{J}{4} \sum_{j=0}^{L-1}( X_j X_{j+1} +  Y_j Y_{j+1}).
    \label{eq:XX-model}
\end{equation}
Here, $J\in\mathbb{R}$ is the coupling constant and we impose the periodic boundary condition.
We Trotterize between the non-commuting elements $A = \frac{J}{4} \sum_{j=0}^{L-1}  X_j X_{j+1}$ and $B = \frac{J}{4} \sum_{j=0}^{L-1}  Y_j Y_{j+1}$.
This is conceptually analogous to the Ising model simulations performed in Ref.~\cite{Heyl2019}\@.
The lower and upper bounds on the Trotter errors $\xi_n(t;\varphi)$ and $b_n(t)$ for this model are computed in the SM~\cite{SM}. See Eqs.~(\ref{eq:chain_state_bound}) and~(\ref{eq:chain_norm_bound}) of Sec.~\hyperlink{app:XX_example}{D.2}, respectively.
For the state-dependent case, we study the Trotter error $\xi_n(t;\varphi)$ for the state $\varphi=\ket{\downarrow\cdots\downarrow}$ with no spin excitations, which has energy $h=0$.
To compute its lower bound, we pick an excited state $\psi$ with two spin excitations and optimize its parameter to maximize the lower bound.
The optimal parameter might change depending on the total evolution time $t$ due to the factor $|{\sin(\lambda t/2)}|$ in the lower bound on $\xi_n(t;\varphi)$ in Eq.~(\ref{eqn:StateBound}).
We compare our lower bounds and the known upper bounds~\cite{Burgarth2023,Burgarth2023a,Suzuki1985,Childs2021} with numerical simulations.
We show the Trotter errors $\xi_n(t;\varphi)$ and $b_n(t)$ as functions of the total evolution time $t$, the number of Trotter steps $n$, and the chain length $L$.
See Fig.~\ref{fig:XX}~\footnote{For Fig.~\ref{fig:XX} and Fig.~\ref{fig:QPE}, we used the tightest bounds we could obtain without further estimating the norms appearing in the bounds. Notice that Eqs.~(\ref{eq:chain_state_bound}) and~(\ref{eq:chain_norm_bound}) in the SM~\cite{SM} are slightly looser as their derivation involves norm equivalences and triangle inequalities.}.
We find that our bounds capture the true error very accurately and are non-trivial in a large parameter regime.
In particular, our lower bound on the norm Trotter error $b_n(t)$ captures the fact that the Trotter error increases with increasing system size $L$ for small enough $Jt$.
This suggests that our bounds are of high practical relevance for the simulation of quantum many-body systems.
Indeed, they also give useful error estimates in the context of quantum phase estimation for Hamiltonian simulation.
Here, we have to choose $t<t_\mathrm{QPE}\equiv 1/\Vert H\Vert_\op$ to prevent the eigenvalues of $H$ from encompassing a full unit circle when evolving under the Trotter unitary~\cite{Bauer2014}\@.
We compute $t_\mathrm{QPE}$ and the maximum time $t_{b_n}$, for which the lower bound on $b_n(t)$ is non-trivial, for different chain lengths $L$.
We find that $t_{b_n}\gg t_\mathrm{QPE}$ already for small $n$, so that the lower bound on $b_n(t)$ becomes non-trivial in all physically relevant settings.
See Fig.~\ref{fig:QPE} for the numerical results.
This showcases how our bounds can be immediately applied to the error analysis in quantum phase estimation.
\begin{figure}
\centering
	\begin{tikzpicture}[mark size={0.7}, scale=1]
			\begin{axis}[
			xlabel={Trotter steps $n$},
			ylabel={$t_{b_n}$ and $t_\mathrm{QPE}$\ \ ($J^{-1}$)},
			tick label style={font=\footnotesize},
			x post scale=1.0,
			y post scale=0.8,
			legend columns=2,
			legend cell align={left},
			legend pos=north west,
			legend style={font=\footnotesize},
			ylabel near ticks,
			xmin=1,
			xmax=8,
			ymin=0,
			ymax=5,
			]
			\addplot[color=plotblue, thick, mark=*, only marks] table[x=n, y=L1, col sep=comma] {XY_QPE.csv};
			\addlegendentry{$L=5$};
			\addplot[color=plotorange, mark=*, only marks] table[x=n, y=L2, col sep=comma] {XY_QPE.csv};
			\addlegendentry{$L=6$};
			\addplot[color=plotgreen, mark=*, only marks] table[x=n, y=L3, col sep=comma] {XY_QPE.csv};
			\addlegendentry{$L=7$};
			\addplot[color=plotred, mark=*, only marks] table[x=n, y=L4, col sep=comma] {XY_QPE.csv};
			\addlegendentry{$L=8$};
			\addplot[color=plotpurple, mark=*, only marks] table[x=n, y=L5, col sep=comma] {XY_QPE.csv};
			\addlegendentry{$L=9$};
			\addplot[color=plotbrown, mark=*, only marks] table[x=n, y=L6, col sep=comma] {XY_QPE.csv};
			\addlegendentry{$L=10$};
			\addplot[color=plotpink, mark=*, only marks] table[x=n, y=L7, col sep=comma] {XY_QPE.csv};
			\addlegendentry{$L=11$};
			\addplot[color=plotblue, domain=1:20] {0.154508};
			\addplot[color=plotorange, domain=1:20] {0.125};
			\addplot[color=plotgreen, domain=1:20] {0.11126};
			\addplot[color=plotred, domain=1:20] {0.0956709};
			\addplot[color=plotpurple, domain=1:20] {0.0868241};
			\addplot[color=plotbrown, domain=1:20] {0.0772542};
			\addplot[color=plotpink, domain=1:20] {0.0711574};
			\end{axis}
		\end{tikzpicture}
	\begin{justify}
	\caption{The minimum time $t_{b_n}$ for which the lower bound on the norm Trotter error $b_n(t)$ becomes non-trivial (dots), and the maximum evolution time $t_\mathrm{QPE}$ for quantum phase estimation before eigenvalue circling (lines), for the $XX$ model $H=A+B$ in Eq.~\eqref{eq:XX-model} with $A=\frac{J}{4}\sum_{j=0}^{L-1} X_j X_{j+1}$ and $B=\frac{J}{4}\sum_{j=0}^{L-1} Y_j Y_{j+1}$, for different chain lengths $L$. Whenever $t_{b_n}>t_\mathrm{QPE}$, our lower bound is practically useful. We find that this is always the case. \label{fig:QPE}}
	\end{justify}
\end{figure}
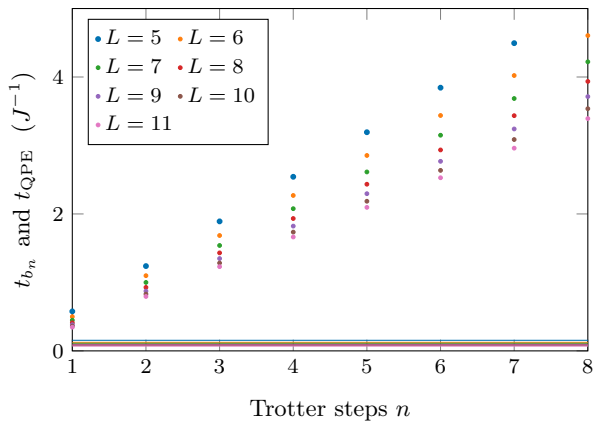

\emph{Conclusion.}---To summarize, we provide lower bounds for the Trotter error
 for bounded Hamiltonians, both in the operator norm and on eigenstates of the target dynamics.
The state-dependent error scales as $\Omega\big(t/n-t^2/n^2\big)$ and the norm error admits a $\Omega\big(\max\{(t^2-t^3)/n,t/n-t^2/n^2\}\big)$ scaling.
Overall, this proves that the finite-dimensional Trotter product formula converges as $\Theta(1/n)$.
Our bounds are tested numerically showing that they genuinely estimate the the true Trotter error.
To the best of our knowledge, this is the first time that the Trotter error is non-trivially bounded from below in terms of all simulation parameters.
This enables a better understanding of practical simulation errors, allowing a more precise fine-tuning of parameters in both digital and analog simulations.
This is particularly important for quantum simulation tasks, which are believed to be one of the most promising potential applications of quantum information science.

\bigskip
\acknowledgments
The authors thank Christian Arenz and Dominic Berry for their helpful feedback on the manuscript.
Furthermore, AH thanks Dominic Berry and Lauritz van Luijk for interesting discussions.

AH was partially supported by the Sydney Quantum Academy.
DB acknowledges funding from the Australian Research Council (project numbers FT190100106, DP210101367, CE170100009) and the Munich Quantum Valley project K8\@.
PF acknowledges supports from the PNRR MUR project CN00000013 - National Centre for HPC, Big Data and Quantum Computing, from Istituto Nazionale di Fisica Nucleare (INFN) through the project ``QUANTUM'', from the Italian National Group of Mathematical Physics (GNFM-INdAM), and from the Italian funding
within the ``Budget MUR - Dipartimenti di Eccellenza 2023--2027'' - Quantum Sensing
and Modelling for One-Health (QuaSiModO).
KY acknowledges supports by the Top Global University Project from the Ministry of Education, Culture, Sports, Science and Technology (MEXT), Japan, and by JSPS KAKENHI Grant Numbers JP18K03470, JP18KK0073, and JP24K06904, from the Japan Society for the Promotion of Science (JSPS).

\bibliographystyle{prsty-title-hyperref}
\bibliography{Trotter_Lower_Bounds}

\newpage
\onecolumngrid
\appendix

\begin{center}
	{\Large\textbf{Supplemental Material}}
\end{center}

\hypertarget{app:preliminaries}{}
\section{A. Notation, Proof Idea, and Basic Lemma}
In this section, we introduce the notation used throughout the proofs of the main results.
Throughout the entire Supplemental Material, we will assume that $A$ and $B$ are bounded Hamiltonians (Hermitian operators) acting on a
Hilbert space $\mathcal{H}$.
They generate unitary one-parameter groups $\rme^{-\rmi tA}$ and $\rme^{-\rmi tB}$, respectively, where the parameter $t\in\mathbb{R}$ is the time.
We introduce the operator norm $\Vert X\Vert_\op=\sup_{\Vert \psi\Vert=1} \Vert X\psi\Vert$, where $\Vert\psi\Vert=\sqrt{\langle \psi,\psi\rangle}$ is the
norm of vectors in $\mathcal{H}$.
For matrices, the operator norm computes to the largest singular value.
It is a unitarily invariant matrix norm.
We remark that our proof method works for any (submultiplicative)
norm as long as it is unitarily invariant.

In the following two sections, we will provide proofs for the lower bounds on the Trotter errors, one in norm and the other on an eigenstate.
The two proofs follow the same idea, which we would like to present here informally.
Let
\begin{equation}
T_{n}(t) \equiv\left(\mathrm{e}^{-\mathrm{i}\frac{t}{n}A}\mathrm{e}^{-\mathrm{i}\frac{t}{n}B}\right)^{n}
\label{eq:def_first_order}
\end{equation}
be the first-order Trotter product formula, let
\begin{equation}
	T_{n}^{(2)}(t) \equiv\left(\mathrm{e}^{-\mathrm{i}\frac{t}{2n}A}\mathrm{e}^{-\mathrm{i}\frac{t}{n}B}\mathrm{e}^{-\mathrm{i}\frac{t}{2n}A}\right)^{n}
	\label{eq:def_second_order}
\end{equation}
be the second-order Trotter product formula, and let
\begin{equation*}
	T(t) \equiv\mathrm{e}^{-\mathrm{i}t(A+B)}
\end{equation*}
be the target evolution that we aim to approximate through Trotterization.
First, we observe that the first-order Trotter product~\eqref{eq:def_first_order} and the second-order Trotter product~\eqref{eq:def_second_order} only differ by boundary terms, i.e.,
\begin{equation*}
T_{n}^{(2)}(t)=\mathrm{e}^{\mathrm{i}\frac{t}{2n}A}T_{n}(t)\mathrm{e}^{-\mathrm{i}\frac{t}{2n}A}.
\end{equation*}
This has also been noticed in Ref.~\cite{Layden2022}, where the author used this fact to prove tighter upper bounds for the first-order Trotterization.
From Ref.~\cite{Childs2021}, we know that
\[
T_{n}^{(2)}(t)-T(t)=\mathcal{O}\!\left(\frac{t^{3}}{n^{2}}\right),
\]
which is equivalent to
\[
\mathrm{e}^{\mathrm{i}\frac{t}{2n}A}T_{n}(t)\mathrm{e}^{-\mathrm{i}\frac{t}{2n}A}-T(t)=\mathcal{O}\!\left(\frac{t^{3}}{n^{2}}\right).
\]
In turn,
\begin{equation}
T_{n}(t)-\mathrm{e}^{-\mathrm{i}\frac{t}{2n}A}T(t)\mathrm{e}^{\mathrm{i}\frac{t}{2n}A}=\mathcal{O}\!\left(\frac{t^{3}}{n^{2}}\right)
\label{eq:proof_idea_boundary}.
\end{equation}
Since Eq.~\eqref{eq:proof_idea_boundary} is dominated by any term that scales as $\mathcal{O}(1/n)$, it is useful for a reverse triangle inequality.
In particular, we can insert a zero in the Trotter error as
\begin{equation*}
	T_n(t)-T(t) = T_n(t) - \mathrm{e}^{-\mathrm{i}\frac{t}{2n}A}T(t)\mathrm{e}^{\mathrm{i}\frac{t}{2n}A} + \mathrm{e}^{-\mathrm{i}\frac{t}{2n}A}T(t)\mathrm{e}^{\mathrm{i}\frac{t}{2n}A} - T(t).
\end{equation*}
By the reverse triangle inequality, the Trotter error can be  bounded below by
\begin{equation}
	b_n(t)\equiv\Vert T_n(t)-T(t)\Vert_\op \geq
	\left\Vert \mathrm{e}^{-\mathrm{i}\frac{t}{2n}A}T(t)\mathrm{e}^{\mathrm{i}\frac{t}{2n}A} - T(t) \right\Vert_\op
	-
	\left\Vert T_n(t) - \mathrm{e}^{-\mathrm{i}\frac{t}{2n}A}T(t)\mathrm{e}^{\mathrm{i}\frac{t}{2n}A} \right\Vert_\op
	\label{eq:proof_norm_bounds_start}
\end{equation}
in norm, and by
\begin{equation}
	\xi_n(t;\varphi)\equiv\Vert [T_n(t)-T(t)]\varphi\Vert \geq
	\left\Vert \left(\mathrm{e}^{-\mathrm{i}\frac{t}{2n}A}T(t)\mathrm{e}^{\mathrm{i}\frac{t}{2n}A} - T(t)\right)\varphi \right\Vert
	-
	\left\Vert \left(T_n(t) - \mathrm{e}^{-\mathrm{i}\frac{t}{2n}A}T(t)\mathrm{e}^{\mathrm{i}\frac{t}{2n}A}\right)\varphi \right\Vert\label{eq:proof_state_bounds_start}
\end{equation}
for vectors $\varphi\in\mathcal{H}$.
The proof of the bounds then reduces to the technical task of bounding these two terms.

To do so, we will use the notion of the adjoint representation:
Given a bounded Hamiltonian $H$ acting on the Hilbert space $\mathcal{H}$, its adjoint representation $\ad_H$ is given by the commutator $\ad_H=[H,{}\bullet{}]$.
In the same way as $H$ generates a one-parameter group $U(t)=\rme^{-\rmi tH}$, the adjoint representation $\ad_H$ does via $\Ad_{U(t)}=\rme^{-\rmi t\ad_H}$.
It is well known that this can be rewritten as $\Ad_{U(t)}=\rme^{-\rmi tH}\bullet\rme^{\rmi tH}$.
For a (bounded) Hamiltonian $H$, the adjoint representation $\ad_H$ is the (bounded) generator of a group of isometries on the Banach space $\mathcal{B}(\mathcal{H})$ of bounded linear operators.
Therefore, the same algebraic relations as for $H$ and $U(t)$ on $\mathcal{H}$ also hold for $\ad_H$ and $\Ad_{U(t)}$ on $\mathcal{B}(\mathcal{H})$.
See Ref.~\cite{Lonigro2024} for details.
This fact will turn out to be very useful for proving the bound on the Trotter error in norm.
For example, we will use Ref.~\cite[Lemma~14]{Hahn2022}, which also holds for the adjoint representation.
Let us recapitulate Ref.~\cite[Lemma~14]{Hahn2022} here and extend its statement to the adjoint representation.

\begin{lem}\label{lem:basic_lemma}
	Let $H=H^\dagger$ and $t\geq 0$. Then,
	\begin{align*}
		\rme^{-\rmi tH} - I &= -\rmi\int_0^t\rme^{-\rmi sH} H\,\rmd s,\\
		\rme^{-\rmi tH} - I + \rmi tH &= -\int_0^t (t-s) \rme^{-\rmi sH} H^2\,\rmd s,
	\end{align*}
	and analogously for the adjoint representation
	\begin{align*}
		\rme^{-\rmi t\ad_H} - \id &= -\rmi\int_0^t\rme^{-\rmi s\ad_H}\ad_H\rmd s,\\
		\rme^{-\rmi t\ad_H} - \id + \rmi t\ad_H &= -\int_0^t (t-s) \rme^{-\rmi s\ad_H}\ad_H^2\rmd s,
	\end{align*}
	where $I$ and $\id$ are the identity operator and the identity map, respectively.
	Therefore, by taking norms we obtain
	\begin{gather*}
		\Vert\rme^{-\rmi tH} - I \Vert_\op \leq t \Vert H\Vert_\op,\\
		\Vert\rme^{-\rmi tH} - I + \rmi tH\Vert_\op \leq \frac{1}{2}t^2 \Vert H^2\Vert_\op,
	\end{gather*}
	and analogously for all $X\in\mathcal{B}(\mathcal{H})$
	\begin{gather*}
		\Vert(\rme^{-\rmi t\ad_H} - \id)(X)\Vert_\op \leq t\Vert [H,X]\Vert_\op,\\
		\Vert(\rme^{-\rmi t\ad_H} - \id + \rmi t\ad_H)(X)\Vert_\op \leq \frac{1}{2}t^2\Vert [H,[H,X]]\Vert_\op.
	\end{gather*}
\end{lem}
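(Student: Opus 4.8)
The plan is to derive everything from a single application of the fundamental theorem of calculus and then transport the argument verbatim to the Banach space $\mathcal{B}(\mathcal{H})$. For the first identity, observe that $H$ commutes with every function of itself, so $\frac{\rmd}{\rmd s}\rme^{-\rmi sH}=-\rmi\rme^{-\rmi sH}H$; integrating this norm-continuous operator-valued identity from $0$ to $t$ gives $\rme^{-\rmi tH}-I=-\rmi\int_0^t\rme^{-\rmi sH}H\,\rmd s$. For the second identity I would iterate: substitute $\rme^{-\rmi sH}=I-\rmi\int_0^s\rme^{-\rmi rH}H\,\rmd r$ inside the integral, obtaining $\rme^{-\rmi tH}-I=-\rmi tH-\int_0^t\int_0^s\rme^{-\rmi rH}H^2\,\rmd r\,\rmd s$, and then swap the order of integration over the triangle $0\le r\le s\le t$ to get $\rme^{-\rmi tH}-I+\rmi tH=-\int_0^t(t-r)\rme^{-\rmi rH}H^2\,\rmd r$.

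For the adjoint representation the very same computations apply without change: $\ad_H$ is a bounded operator on $\mathcal{B}(\mathcal{H})$, it commutes with the norm-continuous one-parameter group $\rme^{-\rmi s\ad_H}=\Ad_{U(s)}$ it generates, and the fundamental theorem of calculus holds for $\mathcal{B}(\mathcal{H})$-valued functions. Hence $\rme^{-\rmi t\ad_H}-\id=-\rmi\int_0^t\rme^{-\rmi s\ad_H}\ad_H\,\rmd s$ and $\rme^{-\rmi t\ad_H}-\id+\rmi t\ad_H=-\int_0^t(t-s)\rme^{-\rmi s\ad_H}\ad_H^2\,\rmd s$ by the identical argument. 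Concretely, this part is nothing but Ref.~\cite[Lemma~14]{Hahn2022} combined with the algebraic correspondence between $(H,U(t))$ on $\mathcal{H}$ and $(\ad_H,\Ad_{U(t)})$ on $\mathcal{B}(\mathcal{H})$.

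The four norm estimates then follow by pushing $\norm{\cdot}_\op$ inside the integrals (triangle inequality for Bochner-type integrals) and using that the operator norm is unitarily invariant, so $\norm{\rme^{-\rmi sH}H}_\op=\norm{H}_\op$ and $\norm{\rme^{-\rmi sH}H^2}_\op=\norm{H^2}_\op$; the remaining integrals are elementary, $\int_0^t\rmd s=t$ and $\int_0^t(t-s)\,\rmd s=t^2/2$. For the adjoint version, the analogous point is that $\rme^{-\rmi s\ad_H}=\Ad_{U(s)}=\rme^{-\rmi sH}\bullet\rme^{\rmi sH}$ acts \emph{isometrically} on $\mathcal{B}(\mathcal{H})$ — conjugation by a unitary preserves the operator norm — so the $\Ad_{U(s)}$ factor drops out and one is left with $\norm{\ad_H(X)}_\op=\norm{\comm{H}{X}}_\op$ and $\norm{\ad_H^2(X)}_\op=\norm{\comm{H}{\comm{H}{X}}}_\op$; evaluating the same two elementary integrals yields the stated bounds.

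There is no real obstacle here — the statement is essentially a restatement of a known lemma. The only points meriting a word of care are the legitimacy of interchanging differentiation/integration with the operator exponentials (fine because $\rme^{-\rmi sH}$ and $\rme^{-\rmi s\ad_H}$ are norm-continuous groups generated by bounded operators, so everything stays within the Riemann/Bochner setting on a compact interval) and the invocation of unitary invariance, respectively the isometric action of $\Ad_{U(s)}$, which is precisely what allows the exponential factors to be discarded in the estimates.
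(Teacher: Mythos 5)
Your proposal is correct and follows essentially the same route as the paper, which simply invokes the Taylor expansion of the exponential with integral remainder (Ref.~\cite[Lemma~14]{Hahn2022}) and notes that the same algebra carries over to $\ad_H$ and $\Ad_{U(t)}$ on $\mathcal{B}(\mathcal{H})$; your FTC-plus-iteration derivation and the unitary-invariance/isometry argument for the norm estimates are just that argument written out explicitly.
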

\begin{proof}
	This is an application of the Taylor polynomial of the exponential function with remainder in integral form. See e.g.~Ref.~\cite[Lemma~14]{Hahn2022}\@.
\end{proof}

\hypertarget{app:state-bounds}{}
\section{B. Lower Bound on the State-Dependent Trotter Error $\xi_n(t;\varphi)$}
In this section, we prove the lower bound on the state-dependent Trotter error $\xi_n(t;\varphi)$ presented inEq.~(\ref{eqn:StateBound}) of the main text.

\begin{proof}[Proof of the Lower Bound on the State-Dependent Trotter Error $\xi_n(t;\varphi)$]
By assumption, we have $(A+B)\varphi=h\varphi$.
The first step is to shift the energy of $A+B$ so that $(A+B)\varphi=0$.
Bounds for arbitrary states with $(A+B)\varphi=h\varphi$ can then be retrieved by shifting back the energy at the end.
This idea has also been employed in Refs.~\cite{Burgarth2023a, Burgarth2023} to prove upper bounds for the Trotter error on eigenstates.

After the energy shift, we start with Eq.~\eqref{eq:proof_state_bounds_start}.
Let us look at the first term.
Using $T(t)\varphi=\varphi$, we can write
\begin{align*}
\left\Vert \left(\mathrm{e}^{-\mathrm{i}\frac{t}{2n}A}T(t)\mathrm{e}^{\mathrm{i}\frac{t}{2n}A}-T(t)\right)\varphi\right\Vert  
& =\left\Vert \left(T(t)\mathrm{e}^{\mathrm{i}\frac{t}{2n}A}-\mathrm{e}^{\mathrm{i}\frac{t}{2n}A}\right)\varphi\right\Vert \\
& =\left\Vert[T(t)-I]\left(\mathrm{e}^{\mathrm{i}\frac{t}{2n}A}-I\right)\varphi\right\Vert \\
 & \geq\left|\left\langle \psi \Big| [T(t)-I]\left(\mathrm{e}^{\mathrm{i}\frac{t}{2n}A}-I\right)\varphi\right\rangle \right|,
 \end{align*}
for all normalized vector $\psi\neq\varphi$.
By assuming that $(A+B)\psi=\kappa\psi$, we get
 \begin{align*}
\left\Vert \left(\mathrm{e}^{-\mathrm{i}\frac{t}{2n}A}T(t)\mathrm{e}^{\mathrm{i}\frac{t}{2n}A}-T(t)\right)\varphi\right\Vert  
&\ge \left|\left\langle(\mathrm{e}^{\mathrm{i}\kappa t}-1)\psi \Big| \left(\mathrm{e}^{\mathrm{i}\frac{t}{2n}A}-I\right)\varphi\right\rangle \right|\\
& =2\left|\sin\left(\frac{\kappa t}{2}\right)\right|\left|\left\langle \psi \Big| \left(\mathrm{e}^{\mathrm{i}\frac{t}{2n}A}-I\right)\varphi\right\rangle \right|.
 \end{align*}
We further proceed as
 \begin{align}
\left\Vert \left(\mathrm{e}^{-\mathrm{i}\frac{t}{2n}A}T(t)\mathrm{e}^{\mathrm{i}\frac{t}{2n}A}-T(t)\right)\varphi\right\Vert  
&\ge
2\left|\sin\left(\frac{\kappa t}{2}\right)\right|
\left|\left\langle \psi \bigg|\left(\mathrm{e}^{\mathrm{i}\frac{t}{2n}A}-I-\frac{\mathrm{i}t}{2n}A\right)\varphi\right\rangle +\frac{\mathrm{i}t}{2n}\langle \psi | A\varphi\rangle \right|\nonumber\\
 & \geq2\left|\sin\left(\frac{\kappa t}{2}\right)\right|\left(
 \frac{t}{2n}|\langle \psi | A\varphi\rangle|
 -\left|\left\langle \psi \bigg|\left(\mathrm{e}^{\mathrm{i}\frac{t}{2n}A}-I-\frac{\mathrm{i}t}{2n}A\right)\varphi\right\rangle \right|
 \right)\label{eq:state_bound_step_alternative}\\
 & \geq2\left|\sin\left(\frac{\kappa t}{2}\right)\right|\left(
 \frac{t}{2n}|\langle\psi | A\varphi\rangle|
 -\left\Vert \left(\mathrm{e}^{\mathrm{i}\frac{t}{2n}A}-I-\frac{\mathrm{i}t}{2n}A\right)\varphi\right\Vert
 \right)\nonumber\\
 & \geq2\left|\sin\left(\frac{\kappa t}{2}\right)\right|\left(
 \frac{t}{2n}|\langle\psi | A\varphi\rangle|
 -\frac{t^{2}}{8n^{2}}\Vert A^2\varphi\Vert
 \right),\label{eq:state_bound_step_alternative_2}
\end{align}
where the third step follows from the Cauchy-Schwarz inequality and the
last step is a consequence of Lemma~\ref{lem:basic_lemma}\@.
We remark that from Eq.~\eqref{eq:state_bound_step_alternative}, we could have alternatively bounded
\begin{equation*}
	\left\Vert \left(\mathrm{e}^{-\mathrm{i}\frac{t}{2n}A}T(t)\mathrm{e}^{\mathrm{i}\frac{t}{2n}A}-T(t)\right)\varphi\right\Vert \ge 2\left|\sin\left(\frac{\kappa t}{2}\right)\right|\left(
 \frac{t}{2n}|\langle\psi | A\varphi\rangle|
 -\frac{t^{2}}{8n^{2}}\Vert A\varphi\Vert \Vert A\psi\Vert
 \right).
\end{equation*}
However, in the examples considered we found that the first bound~\eqref{eq:state_bound_step_alternative_2} is tighter, which is why we use it here.

For the second term of Eq.~\eqref{eq:proof_state_bounds_start}, we can bound it as
\begin{align*}
\left\Vert \left(T_{n}(t)-\mathrm{e}^{-\mathrm{i}\frac{t}{2n}A}T(t)\mathrm{e}^{\mathrm{i}\frac{t}{2n}A}\right)\varphi\right\Vert  
& \leq \left\Vert T_{n}(t)-\mathrm{e}^{-\mathrm{i}\frac{t}{2n}A}T(t)\mathrm{e}^{\mathrm{i}\frac{t}{2n}A}\right\Vert _\op\\
& =\left\Vert \mathrm{e}^{\mathrm{i}\frac{t}{2n}A}T_{n}(t)\mathrm{e}^{-\mathrm{i}\frac{t}{2n}A}-T(t)\right\Vert _\op\\
& =\Vert T_{n}^{(2)}(t)-T(t)\Vert _\op.
\vphantom{\left\Vert \mathrm{e}^{\mathrm{i}\frac{t}{2n}A}\right\Vert}
\end{align*}
An error bound for this distance can be found in Ref.~\cite[Eq.~(L5)]{Childs2021},
\begin{equation}
\Vert T_{n}^{(2)}(t)-T(t)\Vert_\op\leq\frac{t^{3}}{24n^{2}}\Vert[A,[A,B]]\Vert_\op+\frac{t^{3}}{12n^{2}}\Vert[B,[B,A]]\Vert_\op .
\label{eqn:Childs2021}
\end{equation}
In total, we get
\begin{equation*}
\xi_n(t;\varphi)
=\Vert[T_{n}(t)-T(t)]\varphi\Vert
  \geq\left|\sin\left(\frac{\kappa t}{2}\right)\right|\left(\frac{t}{n}|\langle \psi | A\varphi\rangle|
  -\frac{t^{2}}{4n^{2}}\Vert A^{2}\varphi\Vert \right)
  -\frac{t^{3}}{24n^{2}}\Vert[A,[A,B]]\Vert_\op
  -\frac{t^{3}}{12n^{2}}\Vert[B,[B,A]]\Vert_\op.
\end{equation*}
The last step is to shift back the energy and to notice that we can always add any multiple $g\in\mathbb{R}$ of the identity to $A$ and $B$ without changing the Trotter error.
This leads to the replacements $\kappa\rightarrow\vert h - \kappa\vert$ as well as $A\rightarrow A-h+g$ and $B\rightarrow B+g$.
Since the latter two replacements only affect the term $\Vert A^2\varphi\Vert$ in the bound and $g\in\mathbb{R}$ is arbitrary, Eq.~(\ref{eqn:StateBound}) of the main text follows.
\end{proof}

To explicitly see the dependence of this bound on $t$,
we further bound
\[
\left|\sin\left(\frac{\kappa t}{2}\right)\right|\geq1-\left|1-\frac{\kappa t}{\pi}+2\left\lfloor \frac{\kappa t}{2\pi}\right\rfloor \right|,
\]
so that we have
\[
\xi_n(t;\varphi)
=\Omega\!\left(\frac{t}{n}-\frac{t^2}{n^2}\right).
\]
Together with Ref.~\cite[Main~Result~1]{Burgarth2023a}, this proves that the Trotter product converges strongly as $\Theta(1/n)$.
Since the norm is nonnegative, we have
\begin{equation*}
			\xi_n(t;\varphi)\geq \max\!\left\{0,\left|\sin\left(\frac{\kappa t}{2}\right)\right|\left(\frac{t}{n}|\langle \psi | A\varphi\rangle|
  -\frac{t^{2}}{4n^{2}}\Vert A^{2}\varphi\Vert \right) 
  -\frac{t^{3}}{24n^{2}}\Vert[A,[A,B]]\Vert_\op
  -\frac{t^{3}}{12n^{2}}\Vert[B,[B,A]]\Vert_\op\right\}.
\end{equation*}
If $[A,B]=0$, we have $\xi_n(t;\varphi)=0$, while the bound reduces to
\begin{equation*}
\xi_n(t;\varphi)
\geq\max\!\left\{0,-\left|\sin\left(\frac{\kappa t}{2}\right)\right|\frac{t^{2}}{4n^{2}}\Vert A^{2}\varphi\Vert\right\}=0.
\end{equation*}
For the more interesting case of $[A,B]\neq0$, the bound can be made non-trivial by choosing $n$ large enough, as long as $\sin(\kappa t/2)\neq0$.
More explicitly, the bound is non-trivial for all
\begin{equation}
n\ge\frac{t}{4|\langle\psi | A\varphi\rangle|}\left[
\Vert A^{2}\varphi\Vert
+\frac{t}{6|{\sin(\kappa t/2)}|}
\,\Bigl(
\Vert[A,[A,B]]\Vert
+2\Vert[B,[B,A]]\Vert
\Bigr)
\right].
\label{eq:state_bound_non-trivial}
\end{equation}

\hypertarget{app:norm-bounds}{}
\section{C. Lower Bound on the Norm Trotter Error $b_n(t)$}
In this section, we prove a tighter version of the lower bound on the norm Trotter error $b_n(t)$.
That is, the bound
\begin{align}
b_n(t)
={}&\Vert T_{n}(t)-T(t)\Vert_\op \nonumber\\
\ge{}&
\max\!\left\{
\frac{t^2}{2n}
\left(
1
-
\frac{t}{2n}\|A\|_\op
\right)
Z_1(t)
,
\frac{t^2}{2n}
\left(
Z_1(t)
-\frac{t}{2n}
\|A\|_\op
Z_2(t)
\right)
\right\}
-\frac{t^{3}}{24n^{2}}\Vert[A,[A,B]]\Vert_\op
-\frac{t^{3}}{12n^{2}}\Vert[B,[B,A]]\Vert_\op,
\label{eqn:complete_norm_bound}
\end{align}
valid for $0\le t\le2n/\|A\|_\op$, where
\begin{align*}
Z_1(t)
&=\|[A,B]\|_\op
-t
\min\!\left\{
\frac{1}{2}\|[A+B,[A,B]]\|_\op
,
\|A\|_\op\|(A+B)^2\|_\op
\right\},\\
Z_2(t)
&=\frac{\|[A,[A,B]]\|_\op}{2\|A\|_\op}
+t\|A\|_\op\|(A+B)^2\|_\op.
\end{align*}
Note that
\begin{equation*}
\|[A,B]\|_\op\ge\frac{\|[A,[A,B]]\|_\op}{2\|A\|_\op},\qquad
\frac{1}{2}\|[A+B,[A,B]]\|_\op\le2\|A\|_\op\|A+B\|_\op^2.
\end{equation*}
Thus the bound presented in Eq.~(\ref{eqn:NormBound}) of the main text is a loose version,
\begin{equation}
b_n(t)
\ge
\frac{t^2}{2n}
\left(
1-\frac{t}{2n}\|A\|_\op
\right)
\left(
\|[A,B]\|_\op
-
\frac{t}{2}\|[A+B,[A,B]]\|_\op
\right)
-\frac{t^{3}}{24n^{2}}\Vert[A,[A,B]]\Vert_\op
-\frac{t^{3}}{12n^{2}}\Vert[B,[B,A]]\Vert_\op.
\label{eqn:complete_norm_bound_comm}
\end{equation}
We state this bound in the main text, since it actually maximizes the bound in Eq.~\eqref{eqn:complete_norm_bound} for the $XX$ spin chain we examine, and we expect it to be the best bound for most spin models.

\begin{proof}[Proof of the Lower Bound on the Norm Trotter Error $b_n(t)$]
We start with Eq.~\eqref{eq:proof_norm_bounds_start}.
The first term can be  bounded from below as
\begin{align}
\left\|
\rme^{-\rmi\frac{t}{2n}A}T(t)\rme^{\rmi\frac{t}{2n}A}-T(t)
\right\|_\op
&=
\left\|
\left(\rme^{-\rmi\frac{t}{2n}\ad_A}-\id\right)(T(t))
\right\|_\op
\nonumber\\
&=
\left\|
\left(\rme^{-\rmi\frac{t}{2n}\ad_A}-\id+\rmi\frac{t}{2n}\ad_A-\rmi\frac{t}{2n}\ad_A\right)(T(t))
\right\|_\op
\nonumber\\
&\ge
\frac{t}{2n}\|{\ad_A(T(t))}\|_\op
-
\left\|
\left(\rme^{-\rmi\frac{t}{2n}\ad_A}-\id+\rmi\frac{t}{2n}\ad_A\right)(T(t))
\right\|_\op
\nonumber\\
&\ge
\frac{t}{2n}\|{\ad_A(T(t))}\|_\op
-
\frac{t^2}{8n^2}
\|
{\ad_A^2(T(t))}
\|_\op
\nonumber\\
&=
\frac{t}{2n}\|[A,T(t)]\|_\op
-
\frac{t^2}{8n^2}
\|
[A,[A,T(t)]]
\|_\op,
\label{eq:proof_norm_bounds_rti}
\end{align}
where the reverse triangle inequality and a consequence of Lemma~\ref{lem:basic_lemma} are used in the third and fourth steps, respectively.

Let us bound $\|[A,T(t)]\|_\op$ from below.
We here provide two bounds.
The first one is obtained as
\begingroup
\allowdisplaybreaks
\begin{align*}
\|[A,T(t)]\|_\op
&=
\|
A\rme^{-\rmi t(A+B)}
-
\rme^{-\rmi t(A+B)}A
\|_\op
\nonumber\\
&=
\|
\rme^{\rmi t(A+B)}A\rme^{-\rmi t(A+B)}
-
A
\|_\op
\nonumber\\
&=
\|
(
\rme^{\rmi t\ad_{A+B}}
-\id
)
(A)
\|_\op
\nonumber\\
&=
\|
(
\rme^{\rmi t\ad_{A+B}}
-\id
-\rmi t\ad_{A+B}
+\rmi t\ad_{A+B}
)
(A)
\|_\op
\nonumber\\
&\ge
t\|{\ad_{A+B}(A)}\|_\op
-
\|
(
\rme^{\rmi t\ad_{A+B}}
-\id
-\rmi t\ad_{A+B}
)
(A)
\|_\op
\nonumber\\
&\ge
t\|{\ad_{A+B}(A)}\|_\op
-
\frac{t^2}{2}
\|
{\ad_{A+B}^2(A)}
\|_\op
\nonumber\\
&=
t\|[A,B]\|_\op
-
\frac{t^2}{2}
\|
[A+B,[A,B]]
\|_\op,
\intertext{while another one is obtained as}
\|[A,T(t)]\|_\op
&=\|[A,\rme^{-\rmi t(A+B)}]\|_\op
\nonumber\\
&=\|[A,\rme^{-\rmi t(A+B)}-I+\rmi t(A+B)-\rmi t(A+B)]\|_\op
\nonumber\\
&\ge
t\|[A,A+B]\|_\op-\|[A,\rme^{-\rmi t(A+B)}-I+\rmi t(A+B)]\|_\op
\nonumber\\
&\ge
t\|[A,B]\|_\op
-t^2\|A\|_\op\|(A+B)^2\|_\op.
\intertext{Combining these bounds, we get}
	\Vert [A,T(t)]\Vert_\op
	&\geq t\Vert[A,B]\Vert_\op
	-t^2\min\!\left\{
	\frac{1}{2}\|[A+B,[A,B]]\|_\op,
	\Vert A\Vert_\op\Vert(A+B)^2\Vert_\op
	\right\}
	=tZ_1(t).
\end{align*}
\endgroup

Let us next bound $\|[A,[A,T(t)]]\|_\op$ from above.
We again take two different strategies.
The first one proceeds as
\begin{align*}
\|[A,[A,T(t)]]\|_\op
&=\|[A,[A,\rme^{-\rmi t(A+B)}]]\|_\op
\nonumber\\
&=\|A^2\rme^{-\rmi t(A+B)}
-2A\rme^{-\rmi t(A+B)}A
+\rme^{-\rmi t(A+B)}A^2\|_\op
\nonumber\\
&=
\|
\rme^{\rmi t(A+B)}A^2\rme^{-\rmi t(A+B)}
-2\rme^{\rmi t(A+B)}A\rme^{-\rmi t(A+B)}A
+A^2
\|_\op
\nonumber\\
&=
\Bigl\|
\rme^{\rmi t(A+B)}A\rme^{-\rmi t(A+B)}
(\rme^{\rmi t(A+B)}A\rme^{-\rmi t(A+B)}-A)
-
(\rme^{\rmi t(A+B)}A\rme^{-\rmi t(A+B)}-A)A
\Bigr\|_\op
\nonumber\\
&=
\Bigl\|
\rme^{\rmi t\ad_{A+B}}(A)
(\rme^{\rmi t\ad_{A+B}}-\id)(A)
-
(\rme^{\rmi t\ad_{A+B}}-\id)(A)
A
\Bigr\|_\op
\nonumber\\
&\le 
2\|A\|_\op
\|
(\rme^{\rmi t\ad_{A+B}}-\id)(A)
\|_\op
\nonumber\\
&\le 
2t\|A\|_\op
\|
{\ad_{A+B}(A)}
\|_\op
\nonumber\\
&
=
2t\|A\|_\op
\|
[A,B]
\|_\op,
\intertext{where the inequality in the penultimate step follows from Lemma~\ref{lem:basic_lemma}\@. The other strategy proceeds as}
\|[A,[A,T(t)]]\|_\op
&=\|[A,[A,\rme^{-\rmi t(A+B)}]]\|_\op
\nonumber\\
&=\|[A,[A,\rme^{-\rmi t(A+B)}-I+\rmi t(A+B)-\rmi t(A+B)]]\|_\op
\nonumber\\
&\le
t\|[A,[A,A+B]]\|_\op
+\|[A,[A,\rme^{-\rmi t(A+B)}-I+\rmi t(A+B)]]\|_\op
\nonumber\\
&\le
t\|[A,[A,B]]\|_\op
+4\|A\|_\op^2\|\rme^{-\rmi t(A+B)}-I+\rmi t(A+B)\|_\op
\nonumber\\
&\le
t\|[A,[A,B]]\|_\op
+2t^2\|A\|_\op^2\|(A+B)^2\|_\op,
\intertext{where the last step again invokes Lemma~\ref{lem:basic_lemma}\@. Thus,}
\|[A,[A,T(t)]]\|_\op
&\le 
2t\min\!\left\{
\|A\|_\op\|[A,B]\|_\op
,
\frac{1}{2}
\|[A,[A,B]]\|_\op
+t\|A\|_\op^2\|(A+B)^2\|_\op
\right\}
\\
&=2t\|A\|_\op
\min\Bigl\{
\|[A,B]\|_\op
,
Z_2(t)
\Bigr\}.
\end{align*}

Using these bounds, we can bound Eq.~(\ref{eq:proof_norm_bounds_rti}) as
\begin{equation*}
\left\|
\rme^{-\rmi\frac{t}{2n}A}T(t)\rme^{\rmi\frac{t}{2n}A}-T(t)
\right\|_\op
\ge
\frac{t^2}{2n}
Z_1(t)
-\frac{t^3}{4n^2}
\|A\|_\op
\min\Bigl\{
\|[A,B]\|_\op
,
Z_2(t)
\Bigr\}.
\end{equation*}
Or, via
\begin{equation*}
\left\Vert \left[A,\left[A,T(t)\right]\right]\right\Vert_\op
	\leq \left\Vert A\right\Vert_\op\left\Vert\left[A,T(t)\right]\right\Vert_\op,
\end{equation*}
we can alternatively bound Eq.~(\ref{eq:proof_norm_bounds_rti}) as
\begin{align*}
\left\|
\rme^{-\rmi\frac{t}{2n}A}T(t)\rme^{\rmi\frac{t}{2n}A}-T(t)
\right\|_\op
&\ge
\frac{t}{2n}
\left(
1-\frac{t}{2n}\|A\|_\op
\right)
\|[A,T(t)]\|_\op
\nonumber\\
&
\ge
\frac{t^2}{2n}
\left(
1-\frac{t}{2n}\|A\|_\op
\right)
Z_1(t).
\end{align*}
This last inequality is valid for $t\le2n/\|A\|_\op$.
These bounds are then combined as
\begingroup
\allowdisplaybreaks
\begin{align*}
\left\|
\rme^{-\rmi\frac{t}{2n}A}T(t)\rme^{\rmi\frac{t}{2n}A}-T(t)
\right\|_\op
&\ge
\max\!\left\{
\frac{t^2}{2n}
\left(
1
-
\frac{t}{2n}\|A\|_\op
\right)
Z_1(t)
,
\frac{t^2}{2n}
\left(
Z_1(t)
-\frac{t}{2n}
\|A\|_\op
Z_2(t)
\right)
\right\}.
\end{align*}
\endgroup
Note that we have taken into account the fact $Z_1(t)\le\|[A,B]\|_\op$. 
A bound on the second term of Eq.~\eqref{eq:proof_norm_bounds_start} is given by Eq.~(\ref{eqn:Childs2021}), and we get the bound~(\ref{eqn:complete_norm_bound}).
\end{proof}

If $[A,B]=0$, we have $b_n(t)=0$, while the bounds~(\ref{eqn:complete_norm_bound}) and~(\ref{eqn:complete_norm_bound_comm}) both reduce to $b_n(t)\ge0$.
For $[A,B]\neq0$, the bound becomes non-trivial for $t$ small enough.
More explicitly, on the basis of the bound~(\ref{eqn:complete_norm_bound_comm}), it suffices to take
\begin{equation}
t\le\frac{
12n\|[A,B]\|_\op
}{
6n\|[A+B,[A,B]]\|_\op
+6\|A\|_\op\|[A,B]\|_\op
+\Vert[A,[A,B]]\Vert_\op
+2\Vert[B,[B,A]]\Vert_\op
},
\label{eq:norm_non-trivial}
\end{equation}
which is compatible with the valid time range $t\le2n/\|A\|$ for the bound with moderate $n$.

Since $b_n(t)\ge\xi_n(t;\varphi)$, the lower bound on $\xi_n(t;\varphi)$ can be useful also for $b_n(t)$. 
A refined version of the lower bound on $b_n(t)$ is then given by comparing the lower bounds on $b_n(t)$ and on $\xi_n(t;\varphi)$ as 
\begin{align*}
b_n(t)\ge \max\bigg\{0,
\frac{t^2}{2n}
\left(
1
-
\frac{t}{2n}\|A\|_\op
\right)
Z_1(t)
,
\frac{t^2}{2n}
\left(
Z_1(t)
-\frac{t}{2n}
\|A\|_\op
Z_2(t)
\right)
,
\left|\sin\left(\frac{\kappa t}{2}\right)\right|\left(\frac{t}{n}|\langle \psi | A\varphi\rangle|
  -\frac{t^{2}}{4n^{2}}\Vert A(g)^{2}\varphi\Vert \right) 
  \biggr\}
\hphantom{.}\\
-\frac{t^{3}}{24n^{2}}\Vert[A,[A,B]]\Vert_\op
  -\frac{t^{3}}{12n^{2}}\Vert[B,[B,A]]\Vert_\op
  \bigg\}.
\end{align*}

\hypertarget{app:examples}{}
\section{D. Examples}
In this appendix, we study our bounds on the Trotter errors with two physical examples.
The first one is a simple single-qubit model and the other one is the $XX$ model.
Throughout, the Pauli matrices are denoted by $X$, $Y$, and $Z$, i.e.,
\begin{equation*}
X=\begin{pmatrix}
\medskip
0&1\\
1&0
\end{pmatrix},
\qquad
Y=\begin{pmatrix}
\medskip
0&-\rmi\\
\rmi&0
\end{pmatrix},
\qquad
Z=\begin{pmatrix}
\medskip
1&0\\
0&-1
\end{pmatrix}.
\end{equation*}

\hypertarget{app:1-qubit_example}{}
\subsection{D.1. Single-Qubit Example}
Let us consider the case with $A=X$ and $B=Z$.
In this case, we have
\begin{align*}
(A+B)\varphi&=-\sqrt{2}\,\varphi,\quad
\varphi=\frac{1}{\sqrt{2}}
\begin{pmatrix}
\medskip
-\sqrt{1-1/\sqrt{2}}
\\
\sqrt{1+1/\sqrt{2}}
\end{pmatrix},\\
(A+B)\psi&=\sqrt{2}\,\psi,\quad
\psi=\frac{1}{\sqrt{2}}
\begin{pmatrix}
\medskip
\sqrt{1+1/\sqrt{2}}
\\
\sqrt{1-1/\sqrt{2}}
\end{pmatrix},
\end{align*}
and we set $h=-\sqrt{2}$.
The spectral gap is given by $\lambda=2\sqrt{2}$.
Then, for $A(g)=A-h+g$ and $B(g)=B-g$, we have
\[
|\langle \psi | A\varphi\rangle|=\frac{1}{\sqrt{2}},
\]
and
\[
\Vert A(g)^{2}\varphi\Vert=\sqrt{5+6\sqrt{2}\,g+6g^2+2\sqrt{2}\,g^3+g^4},\qquad
\Vert B(g)^{2}\varphi\Vert=\sqrt{1+2\sqrt{2}\,g+6g^2+2\sqrt{2}\,g^3+g^4}.
\]
Furthermore, we have
\begin{equation*}
\Vert [A,B]\Vert_\op =2,\qquad
\Vert[A,[A,B]]\Vert_\op
=\Vert[B,[B,A]]\Vert_\op
=4,\qquad
\Vert[A+B,[B,A]]\Vert_\op
=4\sqrt{2},
\end{equation*}
for the commutators, and
\begin{equation*}
	\Vert A\Vert_\op = 1,\qquad
	\Vert(A+B)^2\Vert_\op=2.
\end{equation*}

Since $a=\min_g\Vert A(g)^{2}\varphi\Vert=\frac{1}{2}\sqrt{6(\sqrt{2}+1)^{2/3}-3(\sqrt{2}+1)^{4/3}+6(\sqrt{2}-1)^{2/3}-3(\sqrt{2}-1)^{4/3}-1}=0.788903$ at $g=-\frac{1}{\sqrt{2}}\left(\sqrt[3]{\sqrt{2}+1}-\sqrt[3]{\sqrt{2}-1}+1\right)=-1.12859$, the lower bound on the state-dependent Trotter error $\xi_n(t;\varphi)$ [Eq.~(\ref{eqn:StateBound}) of the main text] is optimized to
\begin{equation*}
\xi_n(t;\varphi)\geq \max\!\left\{0,
\frac{t^2}{n}
\left(
1-\frac{at}{2\sqrt{2}\,n}
\right)
\frac{|{\sin\sqrt{2}\,t}|}{\sqrt{2}\,t}
-\frac{t^{3}}{2n^{2}}
\right\}.
\end{equation*}
This bound is non-trivial for all
\begin{equation*}
n
\ge
\frac{at}{2\sqrt{2}}
\left(
1
+
\frac{2t}{a|{\sin\sqrt{2}\,t}|}
\right).
\end{equation*}
On the other hand, since $\min_g(\Vert A(g)^{2}\varphi\Vert+\Vert B(g)^{2}\varphi\Vert)=\sqrt{5}$ at $g=-1/\sqrt{2}$, the upper bound on the state-dependent Trotter error $\xi_n(t;\varphi)$ [Eq.~(\ref{eq:upper_state_bound}) of the main text] is optimized to
\begin{equation*}
\xi_n(t;\varphi)\leq \frac{\sqrt{5}\,t^2}{2n}.
\end{equation*}

For the norm Trotter error $b_n(t)$, the lower bound~(\ref{eqn:complete_norm_bound}) yields
\begin{equation*}
b_n(t)\ge 
\max\!\left\{
0,
\frac{t^2}{n}
\left(
1-\frac{t}{2n}
\right)
(1-t)
-\frac{t^3}{2n^2}
\right\},
\end{equation*}
while the upper bound in Eq.~(\ref{eq:upper_norm_bound}) of the main text gives
\begin{equation*}
b_n(t)\le 
\frac{t^2}{n}.
\end{equation*}
The lower bound is non-trivial for 
\[
t\le n+1-\sqrt{n^2+1}.
\]
It suffices to take $t\le1-1/(2n)<n+1-\sqrt{n^2+1}$.

Since $\xi_n(t;\varphi)\le b_n(t)$, the upper bound on $b_n(t)$ is also an upper bound on $\xi_n(t;\varphi)$, and the lower bound on $\xi_n(t;\varphi)$ is also a lower bound on $b_n(t)$.
Putting these together, we get
\begin{mdframed}
\begin{equation}
\max\!\left\{0,
\frac{t^2}{n}
\left(
1-\frac{at}{2\sqrt{2}\,n}
\right)
\frac{|{\sin\sqrt{2}\,t}|}{\sqrt{2}\,t}
-\frac{t^{3}}{2n^{2}}
\right\}
\le\xi_n(t;\varphi)\le b_n(t)\le\frac{t^2}{n}.
\label{eq:1qubit_state_norm_bound}
\end{equation}
\end{mdframed}

\hypertarget{app:XX_example}{}
\subsection{D.2. The $XX$ Model}
Here, we test our bounds with the isotropic one-dimensional $XX$ model, introduced in Ref.~\cite{Lieb1961}\@.
The $XX$ model consists of a chain of $L$ spin-$1/2$ spins with the nearest neighbor interactions.
Its Hamiltonian reads
\begin{equation}
    H = \frac{J}{4} \sum_{j=0}^{L-1} ( X_j X_{j+1} +  Y_j Y_{j+1} ),
    \label{eq:XX_Hamiltonian}
\end{equation}
where $X_j$, $Y_j$, and $Z_j$ denote the actions of the first, second, and third Pauli operators on site $j$, and $J\in\mathbb{R}$ is the coupling constant.
In the following, we will assume that $J>0$.
We impose the periodic boundary conditions, $X_L=X_0$, $Y_L=Y_0$, and $Z_L=Z_0$. 
We consider the Trotter product formula with
\begin{equation*}
A = \frac{J}{4} \sum_{j=0}^{L-1}  X_j X_{j+1} 
\qquad\text{and}\qquad
B = \frac{J}{4} \sum_{j=0}^{L-1}  Y_j Y_{j+1}.
\end{equation*}
Evidently, $A$ and $B$ can be diagonalized trivially by going to the basis of the respective Pauli operators.
The Hamiltonian $A+B$ of the $XX$ model~\eqref{eq:XX_Hamiltonian} can also be diagonalized analytically~\cite{DePasquale2008,DePasquale2009,Franchini2017}\@.
This allows us to analytically compute the upper and lower bounds on the state-dependent Trotter error $\xi_n(t;\varphi)$.

\subsubsection{Diagonalization of the $XX$ Model}
We here recall the exact diagonalization of the $XX$ model~(\ref{eq:XX_Hamiltonian})~\cite{DePasquale2008,DePasquale2009,Franchini2017}, which can be rewritten as
\begin{equation}
	 H=\frac{J}{2} \sum_{j=0}^{L-1} (\sigma^+_j \sigma^-_{j+1} + \sigma^-_j \sigma^+_{j+1}),
	 \label{eq:XX_Hamil_creation_annihilation}
\end{equation}
where $\sigma_j^\pm=\frac{1}{2}(X_j\pm\rmi Y_j)$ are the ladder operators of the $j$th spin.
We perform a Jordan-Wigner transformation to express \eqref{eq:XX_Hamil_creation_annihilation} by spinless fermions $c_j$.
To this end, we replace $\sigma_j^{\pm}$ with
\begin{equation*}
    \sigma_j^- = \rme^{\pi\rmi n_{j\uparrow}} c_j, \qquad
    \sigma_j^+ = \rme^{\pi\rmi n_{j\uparrow}} c_j^\dagger, \qquad
    n_{j\uparrow}=\sum_{i=0}^{j-1}\sigma_i^+\sigma_i^-=\sum_{i=0}^{j-1}c_i^\dag c_i,\qquad
    j\in\{0,\ldots,L-1\}.
\end{equation*}
The new operators $c_j$ and $c_j^\dag$ satisfy the fermionic anticommutation relations,
\begin{equation*}
\{c_i,c_j\}=0,\qquad
\{c_i^\dag,c_j^\dag\}=0,\qquad
\{c_i,c_j^\dag\}=\delta_{i,j},\qquad
    i,j\in\{0,\ldots,L-1\}.
\end{equation*}
In terms of these fermionic operators, the Hamiltonian reads
\begin{equation}
H=\frac{J}{2}\left(
\sum_{j=0}^{L-2}(c_j^\dag c_{j+1}+c_{j+1}^\dag c_j)
-\rme^{\pi\rmi n_\uparrow}(c_{L-1}^\dag c_0+c_0^\dag c_{L-1})
\right),
\label{eq:XX_Hamil_fermionic}
\end{equation}
where $n_\uparrow=n_{L\uparrow}$ is the number of up spins in the chain.
Since 
\begin{equation*}
[\rme^{\pi\rmi n_\uparrow},H]=0,
\end{equation*}
the Hamiltonian $H$ is diagonalizable for $\rme^{\pi\rmi n_\uparrow}=1$ and for $\rme^{\pi\rmi n_\uparrow}=-1$ separately.
Depending on the parity of $n_\uparrow$, we perform a (deformed) Fourier transformation
\begin{equation*}
c_j
=\begin{cases}
\medskip
\displaystyle
\frac{1}{\sqrt{L}}\sum_{k=0}^{L-1}\rme^{2\pi\rmi jk/L}\hat{c}_k^{(-)}
&(n_\uparrow\ \text{odd}),\\
\displaystyle
\frac{1}{\sqrt{L}}\sum_{k=0}^{L-1}\rme^{2\pi\rmi j(k+1/2)/L}\hat{c}_k^{(+)}
&(n_\uparrow\ \text{even}),
\end{cases}
\qquad
j\in\{0,\ldots,L-1\}.
\end{equation*}
This preserves the fermionic anticommutation relations,
\begin{equation*}
\{\hat{c}_k^{(\pm)},\hat{c}_\ell^{(\pm)}\}=0,\qquad
\{\hat{c}_k^{(\pm)\dag},\hat{c}_\ell^{(\pm)\dag}\}=0,\qquad
\{\hat{c}_k^{(\pm)},\hat{c}_\ell^{(\pm)\dag}\}=\delta_{k,\ell},\qquad
k,\ell\in\{0,\ldots,L-1\}.
\end{equation*}
By this transformation, the Hamiltonian~(\ref{eq:XX_Hamil_fermionic}) is diagonalized as
\begin{equation*}
H=J
\sum_{k=0}^{N-1}
\left[
\cos\!\left(\frac{2\pi k}{N}\right)
\hat{c}_k^{(-)\dag}\hat{c}_k^{(-)}
\frac{1-\rme^{\pi\rmi n_\uparrow}}{2}
+
\cos\!\left(\frac{2\pi(k+1/2)}{N}\right)
\hat{c}_k^{(+)\dag}\hat{c}_k^{(+)}
\frac{1+\rme^{\pi\rmi n_\uparrow}}{2}
\right].
\end{equation*}
The eigenvalues corresponding to $\ket{\downarrow \cdots \downarrow}$ and $\ket{\uparrow \cdots \uparrow}$ are zero, which can be easily seen by acting the Hamiltonian $H$ in the representation~\eqref{eq:XX_Hamil_creation_annihilation}\@.

\subsubsection{Bound on the State-Dependent Trotter Error $\xi_n(t;\varphi)$}
We look at the Trotter error $\xi_n(t;\varphi)$ on the eigenstate $\varphi=\ket{\downarrow \cdots \downarrow}$ with no spin excitations.
To obtain a non-trivial lower bound on $\xi_n(t;\varphi)$, we have to choose an eigenstate $\psi$ belonging to a non-zero eigenvalue, such that the term $\langle\psi,A\varphi\rangle$ is non-vanishing.
As $A$ flips two spins simultaneously, we consider states with two excitations, 
\begin{equation*}
    \Psi(k_1,k_2)
    = \hat{c}_{k_1}^{(+)\dag}\hat{c}_{k_2}^{(+)\dag} {\ket{\downarrow  \cdots \downarrow}}.
\end{equation*}
Note that $k_1 = k_2$ yields zero, as the excitations are fermionic, so that $(\hat{c}_k^\dag)^2 = 0$.
The corresponding eigenvalues are
\begin{equation*}
H{{\Psi(k_1,k_2)}} 
= J \left[\cos\!\left(\frac{2\pi(k_1+1/2)}{L}\right) + \cos\!\left(\frac{2\pi(k_2+1/2)}{L}\right)\right] {{\Psi(k_1,k_2)}}.
\end{equation*}
For the term $\langle\psi | A\varphi\rangle$ with $\varphi=\ket{\downarrow \cdots \downarrow}$ and $\psi={\Psi(k_1,k_2)}$, we obtain
\begin{align*}
\langle\Psi(k_1,k_2) | A\varphi\rangle
    &=  \frac{J}{4} \sum_{j=0}^{L-1} {\bra{\Psi(k_1,k_2)}}X_j X_{j+1}{\ket{\downarrow\cdots\downarrow}}\\
    &=  \frac{J}{4} \sum_{j=0}^{L-1} {\bra{\Psi(k_1,k_2)}}\sigma_j^+\sigma_{j+1}^+{\ket{\downarrow\cdots\downarrow}}\\
    &=
    \frac{J}{4} \sum_{j=0}^{L-2}
    {\bra{\Psi(k_1,k_2)}}
    c_j^\dag c_{j+1}^\dag 
    {\ket{\downarrow\cdots\downarrow}}
    -\frac{J}{4}
    {\bra{\Psi(k_1,k_2)}}
    c_{L-1}^\dag
    c_0^\dag
    {\ket{\downarrow\cdots\downarrow}}\\
    &=  \frac{J}{4L} \sum_{j=0}^{L-1}\sum_{\ell_1=0}^{L-1}\sum_{\ell_2=0}^{L-1}
    \rme^{-2\pi\rmi j(\ell_1+1/2)/L}
    \rme^{-2\pi\rmi(j+1)(\ell_2+1/2)/L}
     {\bra{\Psi(k_1,k_2)}}
     \hat{c}_{\ell_1}^{(+)\dag}\hat{c}_{\ell_2}^{(+)\dag}{\ket{\downarrow\cdots\downarrow}}\\
&=-\frac{\rmi J}{2}
    \sin\!\left(\frac{2\pi(k_1+1/2)}{L}\right)
    \delta_{k_1+k_2,L-1}.
\end{align*}
This is non-vanishing only for $k_1 + k_2 = L-1$.
For these states, the spectral gap $\lambda$ is given by
\begin{align*}
    \lambda &= \langle\Psi(k,L-1-k) | H\Psi(k,L-1-k)\rangle - \langle\varphi,H\varphi\rangle \\
    &= 2J\cos\!\left(\frac{2\pi(k+1/2)}{L}\right).
\end{align*}

Next, we compute $\|A(g)^2\varphi\|$.
We set $g=0$, so that $\|A(g)^2 \varphi\|=\|A^2 \varphi\|$.
Assuming $L\ge5$, we get
\begin{align*}
\|A^2\varphi\|^2 
&= \left(\frac{J}{4}\right)^4
\sum_{j_1 = 0}^{L-1} 
\sum_{j_2 = 0}^{L-1} 
\sum_{j_3 = 0}^{L-1} 
\sum_{j_4 = 0}^{L-1} 
{\bra{\downarrow\cdots\downarrow}}
X_{j_1}X_{j_1+1}X_{j_2}X_{j_2+1}X_{j_3}X_{j_3+1}X_{j_4}X_{j_4+1}
{\ket{\downarrow\cdots\downarrow}}\\
&= \frac{J^4}{4^4}
\left(
\sum_{j_1 = 0}^{L-1} 
\sum_{j_2 = 0}^{L-1} 
\sum_{j_3 = 0}^{L-1} 
\sum_{j_4 = 0}^{L-1} 
(\delta_{j_1,j_2}\delta_{j_3,j_4}
+\delta_{j_1,j_3}\delta_{j_2,j_4} 
+\delta_{j_1,j_4}\delta_{j_1,j_3}
)-2L
\right) \\
&=\frac{J^4}{4^4}L(3L-2),
\end{align*}
where the second equality follows from the fact that the states with different spins are orthogonal.
In turn, we always need an even number of $X$ operators acting on the same site to get a non-zero result.
The term $-2L$ corrects the overcounting of the terms where all four indices are the same, $j_1=j_2=j_3=j_4$.
Note that for $L=4$ there are additional terms of the form $\delta_{j_1+1,j_2}\delta_{j_2+1,j_3}\delta_{j_3+1,j_4}$ contributing.

To compute the norms of the double-commutators in the bound, we first compute the commutator $[A,B]$,
\begin{align}
[A,B]
&=\frac{J^2}{16}
\sum_{i=0}^{L-1}
\sum_{j=0}^{L-1}
[X_i X_{i+1},Y_j Y_{j+1}]
\nonumber\\
&=
\frac{\rmi J^2}{8}
\sum_{j=0}^{L-1}
(
X_jY_{j+2}
+
Y_jX_{j+2}
)
Z_{j+1}.
\label{eq:commutator}
\end{align}
This allows us to compute the double-commutators
\begin{align*}
[A,[A,B]]
&=
\frac{\rmi J^3}{32}
\sum_{i=0}^{L-1}
\sum_{j=0}^{L-1}
[
X_iX_{i+1}
,
(
X_jY_{j+2}
+
Y_jX_{j+2}
)
Z_{j+1}
]
\nonumber\\
&=
\frac{J^3}{8}
\sum_{j=0}^{L-1}
(
Y_j
Y_{j+1}
-
X_j
Z_{j+1}
Z_{j+2}
X_{j+3}
),
\end{align*}
and similarly
\begingroup
\allowdisplaybreaks
\begin{align*}
[B,[B,A]]
&=-\frac{\rmi J^3}{32}
\sum_{i=0}^{L-1}
\sum_{j=0}^{L-1}
[Y_iY_{i+1},(X_j Y_{j+2} +  Y_jX_{j+2}) Z_{j+1}]\\
&=\frac{J^3}{8}
\sum_{j=0}^{L-1}
(
X_jX_{j+1}
-
Y_jZ_{j+1}Z_{j+2}Y_{j+3}
).
\end{align*}
\endgroup
We bound the norms of the double-commutators $\|[A,[A,B]]\|_\op$ and $\|[B,[B,A]]\|_\op$ by using the triangle inequality as
\begin{align*}
\|[A,[A,B]]\|_\op 
&=\frac{J^3}{8} \left\|\sum_{j=0}^{L-1} (Y_jY_{j+1} - X_jZ_{j+1}Z_{j+2}X_{j+3})\right\|_\op \\
&\leq \frac{J^3}{8} \sum_{j=0}^{L-1}\Bigl(
\|Y_jY_{j+1}\|_\op 
+ \|X_jZ_{j+1}Z_{j+2}X_{j+3}\|_\op
\Bigr)\\
&=\frac{J^3}{8} \sum_{j=0}^{L-1} \Bigl(
\|Y\|_\op^2 + \|X\|_\op^2\|Z\|_\op^2
\Bigr)\\
&
= \frac{J^3}{4} L,
\intertext{and}
\|[B,[B,A]]\|_\op 
&\le \frac{J^3}{4} L.
\end{align*}

Putting these elements together, the state-dependent Trotter error $\xi_n(t;\varphi)$ is bounded (for $L \geq 5$) by
\begin{mdframed}
\begin{multline}
\max\!\left\{0,\left|
\sin\!\left[
Jt\cos\!\left(\frac{2\pi(k+1/2)}{L}\right)
\right]
\right|
\left(
\frac{Jt}{2n}
\left|
    \sin\!\left(\frac{2\pi(k+1/2)}{L}\right)
\right|
-
\frac{J^2t^2}{64n^2}\sqrt{L(3L-2)}
\right) 
-\frac{J^3t^{3}}{32n^{2}}L
\right\}
\\
\le\xi_n(t;\varphi)
\le\frac{J^2t^2}{16n}\sqrt{L(3L-2)}.
\label{eq:chain_state_bound}
\end{multline}
\end{mdframed}
To obtain the best possible bound, we can take the supremum over $k\in\{0,1,\ldots,L-1\}$.
This lower bound is non-trivial for
\begin{equation*}
n
\ge
\frac{JtL}{
32|{\sin[2\pi(k+1/2)/L]}|
}
\left(
\sqrt{3-2/L}
+
\frac{
2Jt
}{
|
{
\sin\{
Jt\cos[2\pi(k+1/2)/L]
\}
}
|
}
\right).
\end{equation*}

\subsubsection{Bound on the Norm Trotter Error $b_n(t)$}
Many of the quantities involved in the bounds on the norm Trotter error $b_n(t)$ have already been computed for the bounds on the state-dependent Trotter error $\xi_n(t;\varphi)$.
For instance, the commutator $[A,B]$ is calculated in Eq.~\eqref{eq:commutator}\@.
Its operator norm $\Vert[A,B]\Vert_\op$ is upper bounded by
\begin{align*}
\|[A,B]\|_\op
&=
\frac{J^2}{8}
\left\|
\sum_{j=0}^{L-1}
(
X_jY_{j+2}
+
Y_jX_{j+2}
)
Z_{j+1}
\right\|_\op
\nonumber\\
&\le 
\frac{J^2}{8}
\sum_{j=0}^{L-1}
\Bigl(
\|X_jZ_{j+1}Y_{j+2}\|_\op
+
\|Y_jZ_{j+1}X_{j+2}\|_\op
\Bigr)
\nonumber\\
&=
\frac{J^2}{4}L.
\end{align*}
We also need its lower bound. 
To bound it from below, we use the equivalence of the matrix norms.
That is, for all $M\in\mathbb{C}^{d\times d}$, we have
\begin{equation}
	\Vert M\Vert_\op \geq \frac{1}{\sqrt{d}} \Vert M\Vert_\F,
\end{equation}
where $\Vert M\Vert_\F=\sqrt{\tr(M^\dag M)}$ is the Frobenius norm~\cite[Eq.~(1.169)]{Watrous2018}.
Applying this to Eq.~\eqref{eq:commutator}, we get
\begingroup
\allowdisplaybreaks
\begin{align*}
\|[A,B]\|_\op^2
&\geq
\frac{1}{2^L}
\|[A,B]\|_\F^2
\nonumber\\
&=
\frac{1}{2^L}
\frac{J^4}{64}
\tr\!\left[
\left(
\sum_{j=0}^{L-1}
(
X_jY_{j+2}
+
Y_jX_{j+2}
)
Z_{j+1}
\right)^2
\right]
\nonumber\\
&=
\frac{1}{2^L}
\frac{J^4}{64}
\sum_{j=0}^{L-1}
\tr
[
(
X_jZ_{j+1}Y_{j+2}
)^2
+
(
Y_jZ_{j+1}X_{j+2}
)^2
]
\nonumber\\
&=
\frac{1}{2^L}
\frac{J^4}{64}
\sum_{j=0}^{L-1}
2^{L+1}
\nonumber\\
&=
\frac{J^4}{32}
L.
\end{align*}
\endgroup
Therefore,
\begin{equation}
\frac{J^2}{4} \sqrt{\frac{L}{2}}
\le\|[A,B]\|_\op
\le\frac{J^2}{4}L.
\label{eqn:UpperBoundCommAB}
\end{equation}
Additionally, we need upper bounds on $\|A\|_\op$ and $\|A+B\|_\op$.
To this end, we again use the triangle inequality to get
\begin{align*}
\|A\|_\op
&=
\frac{J}{4}
\left\|
\sum_{j=0}^{L-1}
X_jX_{j+1}
\right\|_\op
\nonumber\\
&\le 
\frac{J}{4}
\sum_{j=0}^{L-1}
\|X_jX_{j+1}\|_\op
\nonumber\\
&=
\frac{J}{4}L,
\intertext{and}
    \|A+B\|_\op
    &= \frac{J}{4} \left\|\sum_{j=1}^L (X_j X_{j+1} + Y_j Y_{j+1})\right\|_\op\\
    &\leq \frac{J}{4} \sum_{j=1}^L \Bigl(
    \|X_j X_{j+1}\|_\op + \|Y_j Y_{j+1}\|_\op \Bigr)\\
    &= \frac{J}{2} L .
\end{align*}
Using these bounds, we get
\begin{align*}
Z_1(t)
&=\|[A,B]\|_\op
-t
\min\!\left\{
\frac{1}{2}\|[A+B,[A,B]]\|_\op
,
\|A\|_\op\|(A+B)^2\|_\op
\right\}\\
&\ge 
\frac{J^2}{4} \sqrt{\frac{L}{2}}
-\min\!\left\{
\frac{J^3t}{4}L
,
\frac{J^3t}{16}L^3
\right\}
\\
&=
\frac{J^2}{4} \sqrt{\frac{L}{2}}
-\frac{J^3t}{4} L,
\end{align*}
for $L\ge2$, and
\begin{align*}
\|A\|_\op Z_2(t)
&=\frac{1}{2}\|[A,[A,B]]\|_\op
+t\|A\|_\op^2\|(A+B)^2\|_\op
\\
&\le 
\frac{J^3}{8}L
+\frac{J^4t}{64}L^4.
\end{align*}
These allow us to bound the lower bound in Eq.~(\ref{eqn:complete_norm_bound}) from below.
On the other hand, the upper bound in Eq.~(\ref{eq:upper_norm_bound}) of the main text can be bounded from above using the upper bound on $\|[A,B]\|_\op$ in Eq.~(\ref{eqn:UpperBoundCommAB}).
Overall, the norm Trotter error $b_n(t)$ is bounded (for $L \geq 5$ and $Jt\le8n/L$) by
\begin{mdframed}
\begin{align}
\max\biggl\{&
0,
\frac{J^2t^2}{8n}
\,\biggl(
\sqrt{\frac{L}{2}}
-JtL
\biggr)
-
\frac{J^3t^3}{64n^2}
L
\min\biggl\{
\sqrt{\frac{L}{2}}
-JtL
,
2+\frac{Jt}{4}L^3
\biggr\}
-\frac{J^3t^{3}}{32n^{2}}L
,
\nonumber\\
&
\left|
\sin\!\left[
Jt\cos\!\left(\frac{2\pi(k+1/2)}{L}\right)
\right]
\right|
\left(
\frac{Jt}{2n}
\left|
\sin\!\left(\frac{2\pi(k+1/2)}{L}\right)
\right|
-
\frac{J^2t^2}{64n^2}\sqrt{L(3L-2)}
\right) 
-\frac{J^3t^{3}}{32n^{2}}L
\biggr\}
\le
b_n(t)
\le
\frac{J^2t^2}{8n}L.
\label{eq:chain_norm_bound}
\end{align}
\end{mdframed}
Note that this upper bound on $b_n(t)$ is larger than the upper bound on $\xi_n(t;\varphi)$ in Eq.~(\ref{eq:chain_state_bound}).
\end{document}